\newtheorem{remark}{Remark}
\newtheorem{theorem}{Theorem}
\newtheorem{lemma}{Lemma}
\newtheorem{corollary}{Corollary}
\newtheorem{assumption}{Assumption}
\begin{document}
\title{I/Q Imbalance Aware Nonlinear Wireless-Powered Relaying of B5G Networks: Security and Reliability Analysis}

\author{Xingwang~Li,~\IEEEmembership{Senior Member,~IEEE,}
        Mengyan~Huang,~\IEEEmembership{Student Member,~IEEE,}
         Yuanwei~Liu,~\IEEEmembership{Senior Member,~IEEE,}
         Varun~G~Menon,~\IEEEmembership{Senior Member,~IEEE,}
         Anand~Paul,~\IEEEmembership{Senior Member,~IEEE,}
        and Zhiguo~Ding,~\IEEEmembership{Fellow,~IEEE}
\thanks{X. Li are M. Huang are with the School of Physics and Electronic Information Engineering, Henan Polytechnic University, Jiaozuo, China (email:lixingwangbupt@gmail.com, huangmengyan66@163.com).}
\thanks{Y. Liu is with the School of Electronic Engineering and Computer Science, Queen Mary University of London, London, UK, London, UK (email:yuanwei.liu@qmul.ac.uk).}
\thanks{V.~G.~Menon is with the department of Computer Science and Engineering, SCMS School of Engineering and Technology, Ernakulam, India (e-mail: varunmenon@ieee.org).}
\thanks{A.~Paul is with the School of Computer Science and Engineering, Kyungpook National University,  South Korea (e-mail: paul.editor@gmail.com).}
\thanks{Z. Ding is with the School of Electrical and Electronic Engineering, The University of Manchester, Manchester, UK (email: zhiguo.ding@manchester.ac.uk).}
}

\maketitle
\nocite{Li2018}
\begin{abstract}
Physical layer security is known as a promising paradigm to ensure security for the beyond 5G (B5G) networks in the presence of eavesdroppers. In this paper,
we elaborate on a tractable analysis framework to evaluate the reliability and security of wireless-powered decode-and-forward (DF) multi-relay networks. The nonlinear energy harvesters, in-phase and quadrature-phase imbalance (IQI) and channel estimation errors (CEEs) are taken into account in the considered system. To further improve the secure performance, two relay selection strategies are presented: \emph{1) suboptimal relay selection (SRS); 2) optimal relay selection (ORS)}. Specifically, exact analytical expressions for the outage probability (OP) and the intercept probability (IP) are derived in closed-form. For the IP, we consider that the eavesdropper can wiretap the signal from the source or the relay. In order to obtain more useful insights, we carry out the asymptotic analysis and diversity orders for the OP in the high signal-to-noise ratio (SNR) regime under non-ideal and ideal conditions. Numerical results show that: 1) Although the mismatches of amplitude/phase of transmitter (TX)/receiver (RX) limit the OP performance, it can enhance IP performance; 2) Large number of relays yields better OP performance; 3) There are error floors for the OP because of the CEEs; 4) There is a trade-off for the OP and IO to obtain the balance between reliability and security.
\end{abstract}

\begin{IEEEkeywords}
B5G, channel estimation error, in-phase and quadrature-phase imbalance, nonlinear energy harvester, physical layer security
\end{IEEEkeywords}

\section{Introduction}
The goals of the fifth generation (5G) and beyond (B5G) wireless networks can provide reliable communication among almost all aspects of life through the network with higher date rate, lower latency and ubiquitous connectivity \cite{8977557}. The security has been identified as an vital factor for wireless communication systems, which has triggered enormous interests from both academia and industry \cite{1,2}. However, due to the broadcast characteristics of wireless communication, it is difficult to ensure secure communication for the wireless networks without being eavesdropped by un-authored receivers. The conventional methods to ensure the security of wireless communication are to use encryption algorithms, which impose extra computational overhead and system complexity \cite{8879726}. In addition, with the rapid development of chip and computer technologies, conventional encryption technologies can not provide perfect security.

As an alternative way to ensure security, physical layer security (PLS) has sparked a great deal of research interests \cite{1}. The basic principle of PLS is to exploit the inherent randomness of fading channels to resist the information to be extracted by eavesdroppers \cite{2}.
Recently, there are a great of research works investigated the PLS under various fading channels, e.g. see \cite{5,4,6,7} and the references therein. In \cite{5}, a secure transmit-beamforming of the multiple-input multiple-output (MIMO) systems over Rayleigh fading channels was designed, in which the maximal ratio combing (MRC) receivers were adopted to maximize the signal-to-noise ratio (SNR) at main receiver. The authors of \cite{4} investigated the PLS of artificial noise aided MIMO systems over Rician fading channels. Meanwhile, the secure performance of the classic wiretap model was discussed over the generalized Gamma fading channels and the analytical expressions of the strictly positive secrecy capacity (SPSC) probability and the lower bound for the secrecy outage probability (SOP) were derived \cite{6}. On the other hand, the authors investigated the secrecy performance over $\kappa-\mu$ shadowed fading channels of classic Wyner's wiretap model and the approximate expressions on the lower bound for the SOP in the high SNR region and SPSC probability with the aid of a moment matching method have been obtained \cite{7}.

In actual situations, it is difficult to have direct links between the sources and the destinations due to shadow fading and/or obstacle, so it is indispensable to use the relay to complete the communication \cite{8,9}. In light of this fact, relaying assisted transmission has been identified as one of the key technologies in the current and future wireless cooperative networks \cite{10,11}. The signals can be decoded and transferred from the source to the destination by using low cost and low power consumption relay nodes. In general, there are two basic relay protocols: i) amplify-and-forward (AF) \cite{14,15,16}, and ii) decode-and-forward (DF) \cite{17,18}. In \cite{14}, the authors studied the ergodic capacity (EC) performance of fixed-gain AF dual-hop (DH) networks and derived two analytical expressions on the EC bound. Extending to multi-hop networks, the authors of \cite{15} derived the EC, outage probability (OP) and average symbol error probability by the generalized transformed characteristic function approach. In \cite{16}, the authors investigated the performance of a multi-hop AF communication network over Nakagami-0.5 channels and the closed-form analytical approximate expressions for the OP, ASEP and EC were obtained.  To maximize confidentiality, the authors in \cite{17} investigated the secure performance of multiple DF relay systems.

When deploying multiple relays in the systems, it will incur extra inter-relay (IR) interference and energy consumption. To avoid this problem, relay selection (RS) has been proposed \cite{19}. Among the various RS schemes, optimal relay selection (ORS), suboptimal relay selection (SRS) and MRC are the most prevalent ones \cite{23,24,25}. The pioneering work of the ORS scheme has been proposed by Bletsas according to selecting the relay with the largest instantaneous end-to-end SNR \cite{22}. Based on the ORS, the authors in \cite{23} investigated the symbol error rate (SER) of AF relay systems. To reduce the requirement of channel knowledge, the authors proposed a SRS scheme that the optimal relay is selected according to the link either source-relay or relay-destination \cite{24}. Cognitive radio inspired cooperative relay systems was introduced, and the secure outage performance was studied over independent and non-identically distributed Nakagami-$m$ fading channels. In \cite{25}, the authors compared the secrecy outage performance of cognitive radio networks for ORS, SRS with MRC schemes over Nakagami-$m$ fading channels.

Although the performance of wireless cooperative networks can be improved by appropriate relay protocols and RS scheme, the operation of wireless communication system is constrained by power shortages of their wireless devices. This happens that in some cases the nodes are deployed in the remote or power limited areas \cite{27}. In light of this context,
some energy harvesting (EH) techniques have been proposed to prolong the life of the batteries of such wireless transmission devices \cite{28,29}. Among the various EH techniques, radio frequency (RF) enabled simultaneous wireless information and power transfer (SWIPT) attracts a lot of attentions because it can overcome the limitations of some other renewable energy resources such as solar energy, wind energy and magnetic induction that can only be used in some specific circumstances \cite{30}. In addition, RF signals are ubiquitous in electromagnetic waves, and EH in RF is green, safe, controllable and reliable \cite{32}. There are usually two common protocols for SWIPT systems: i) time-switching (TS) and ii) power-splitting (PS) \cite{35,37,31}. For TS, the authors in \cite{35} investigated the outage performance of SWIPT-assisted non-orthogonal multiple access (NOMA) relay systems over Weibull fading channels. 
Considering PS protocol, the secure performance of two-way relaying systems was researched through a joint-optimization solution over geometric programming and binary particle swarm optimization \cite{37}. Additionally, a large-scale RF-EH technique with PS protocol was adopted, and the OP performance and average harvested energy were analyzed \cite{31}.

The aforementioned studies are based on the assumption of ideal hardware components and perfect channel state information (CSI), which is unrealistic in practical communication systems. In practice, due to component mismatch and manufacturing non-idealities, these monolithic architectures inevitably have defects associated with the RF front-ends, thereby limiting the overall system performance \cite{8879698}. A typical example of these impairments is the in-phase and quadrature-phase imbalance (IQI), which refers to the mismatches of amplitude and phase between I and Q branches of the transceiver. This will result in incomplete image suppression and ultimately lead to degradation of the performance for the total communication system \cite{39}. Ideally, the I and Q branches of the mixer have an amplitude of 0 and a phase shift of $90^\circ $, providing an infinitely attenuated image band; however, in practice, the transceiver is susceptible to some analog front-end damage, and these damages introduce errors in the phase shift resulting in amplitude mismatch between the I and Q branches, thereby damaging the down-converted signal constellation, thereby increasing the corresponding error \cite{38}. Motivated by the above practical concern, several research contributions have studied the systems secure performance in the presence of IQI \cite{41,9032127,XingwangLI}. Under the assumption of uncorrelation between channel of each subcarrier and its image, Ozdemir \emph{et al.} in \cite{41} derived an exact expression for the SINR of OFDM systems with IQI at transceivers. The authors analyzed the impact of joint IQI on the security and reliability of cooperative NOMA for IoT Networks \cite{9032127}. Considering backscatter communication, Li \emph{et al.} in \cite{XingwangLI} derived analytical expressions for OP and the intercept probability (IP) of ambient backscatter NOMA systems under IQI.
On the other hand, imperfect CSI (ICSI) may be existed due to the presence of channel estimation errors (CEEs) and feedback delay. Therefore, it is of great practical significance to study the impact of ICSI and IQI on the security performance of cooperative networks.

\subsection{Motivation and Contribution}
Motivated by the above discussion, we study the reliability and security of cooperative multi-relay networks in the presence of nonlinear energy harvesters, ICSI and IQI. Under these imperfect conditions, three RS schemes, random relay selection (RRS), SRS, ORS are considered. Specifically, we derive the analytical expressions for the OP and IP. For the security, the direct transmission and cooperative transmission through relay are considered. In this study, we assume that the source and relay nodes of the networks are configured with nonlinear energy harvesters to harvest energy from the nearby power beacon under different saturation thresholds. This is reasonable in some applications, such as internet-of-things (IoT), mesh networks and Ad Hoc networks, etc. The main contributions of this paper are summarized as follows:
\begin{itemize}
  \item Considering IQI and CEEs, we propose three representative RS schemes, namely RRS, SRS and ORS. RRS is considered as a benchmark for the purpose of comparison. In SRS, the optimal relay is selected according to the channel conditions either the $S-{R_m}$ or the ${R_m}-D$. In ORS, the optimal relay is selected according to the link qualities both the $S-{R_m}$ and the ${R_m}-D$. The major difference between our work and \cite{43} is that to study the effects of IQI caused by the mismatches of amplitude and phase between I and Q branch.

  \item Different from the most existing research works, we use a more realistic nonlinear EH model due to the nonlinearity of the electronic devices \cite{49,7264986Schober2015CL}. We have the assumption that nonlinear energy harvesters are equipped at source and relays, which can harvest energy from the nearby power beacon.
  \item For the reliability, we derived the exact analytical expressions for the OP of the proposed system for the three RS schemes. For the security, we consider two typical scenarios that direct transmission and cooperative transmission, the exact closed-form analytical expressions of the IP for the two scenarios are derived.\footnote{In some cases, the eavesdropper can simultaneously receive signals from both source and relays. Our work can be easily extended to these cases by combining the received signals from source and relays using the selection combining or MRC.}
  \item To obtain more insights, we derived the asymptotic analytical expressions and diversity orders for the OP of the three RS schemes under non-ideal conditions. It reveals that there are error floors for the OP due to the non-zero CEEs, and the OP performance is limited by the IQI parameters.


\end{itemize}

\begin{figure}[!t]
\setlength{\abovecaptionskip}{0pt}
\centering
\includegraphics [width=3.6in]{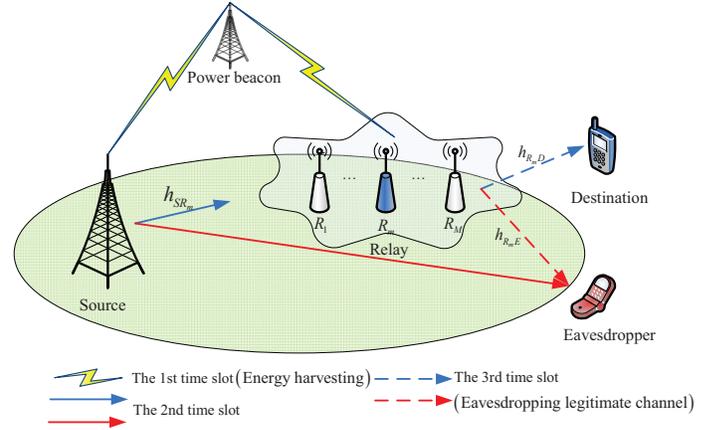}
\caption{System Model}
\label{fig1}
\end{figure}

\subsection{Organization and Notations}
The rest of the paper is organized as follows. In section II, we present a brief introduction of the considered system model. In section III, the reliability of the considered system for the three RS schemes is studied in terms of OP, while the security is analyzed through deriving expressions of the IP. In section IV, some numerical results are provided to verify the correctness of our analysis. Finally, we present a conclusion of this paper in Section V.

We use $\left|  \cdot  \right|$ to define absolute value. The notations ${\rm E}{\{ \cdot\} }$ and $\triangleq$ denote the expectation and definition operations, respectively. $e \sim \mathcal{CN}\left( {\mu ,{\sigma ^2}} \right)$ defines a complex Gaussian distribution with a mean of $\mu$ and a variance of ${\sigma ^2}$. $\Pr \left\{  \cdot  \right\}$ represents the probability and ${{\rm K}_v}\left(  \cdot  \right)$ denotes the $v$-th order modified Bessel function of the second kind. The probability density function (PDF) and cumulative distribution function (CDF) are expressed by ${f_X}\left(  \cdot  \right)$ and ${F_X}\left(  \cdot  \right)$, respectively. Finally, $\log_2 \left(  \cdot  \right)$ is the logarithm.

\section{System Model}\label{sec2}
We consider a DF multi-relay system as shown in Fig. 1, which deploys one power beacon $B$, one source $S$, $M$  relays ${R_m},m \in \left\{ {1,2, \cdots ,M} \right\}$, one destination $D$, one eavesdropper $E$ and all nodes equipped with a single antenna. All nodes are operate in half-duplex (HD) mode. In order to improve the secure performance of the considered system, the SRS and ORS schemes are used to select the optimal relay among the $M$ relays, while RRS scheme is presented as a benchmark. The source and all relay nodes are energy-constrained and can harvest energy from nearby $B$ according to the TS protocol. It is considered that there is no direct link of $S \to D$ due to the blockage or heavy shadowing.

In fact, it is difficult to obtain perfect CSI in the communication process because of the CEEs, and the most common method is to estimate the channel using the training sequence. In this study, the linear minimum mean square error (LMMSE) is adopted here. Thus, the channel can be modeled as
\begin{equation}
\label{1}
{h_j} = {\hat h_j} + {e_j},
\end{equation}
where ${\hat h_j},j \in \left\{ {S{R_m}, SE, {R_m}D,{R_m}E} \right\},\left( {1 \le m \le M} \right)$ is estimated channel of the real channel ${h_j}$, ${e_j} \sim \mathcal{CN}\left( {0,\sigma _{{e_j}}^2} \right)$ is the CEE, where $\sigma _{{e_j}}^2$ is the variance of estimation error and which is considered in two representative channel estimation models: 1) It is a non-negative fixed constant; 2) It is a function of transmit average SNR and can be modeled as $\sigma _{{e_j}}^2 = {{{\Omega _j}} \mathord{\left/
 {\vphantom {{{\Omega _j}} {\left( {1 + \delta {\rho _j}{\Omega _j}} \right)}}} \right.
 \kern-\nulldelimiterspace} {\left( {1 + \delta {\rho _j}{\Omega _j}} \right)}}$, where $\delta $ is the channel estimation quality parameter that indicates the power consumption of the training pilot to obtain CSI; ${\Omega _j}$ and ${\rho _j}$ are the variance of channel gain and transmit average SNR, respectively \cite{44}. We assume that all communication links are subject to Rayleigh fading and path loss \cite{45}.

Typically, IQI is modeled as the phase and/or amplitude imbalance between transceiver I and Q signal paths. As depicts in \cite{46,47}, the asymmetrical IQI model can be considered, where I branch and Q branch are assumed to be ideal and errors, respectively. Here, both transmitter (TX) and receiver (RX) are subject to IQI, in which case the transmitted baseband signals can be expressed as
\begin{equation}
\label{2}
{x_{IQI}} = {\mu _{t/r}}{y_j} + {v_{t/r}}y_j^*,
\end{equation}
where ${y_j} = \sqrt {P_{S/R}} {x_j}$ is the baseband signal that is transmitted under the conditions of non-ideal I/Q matching with ${\rm E}\left\{ {{{\left| {{x_j}} \right|}^2}} \right\} = 1$, ${x_j}$ is the transmit signal of the TX, ${P_S}$ and ${P_{{R_m}}}$ are the transmit power at $S$ and ${R_m}$, respectively; The IQI coefficients are given by ${\mu _t} = \frac{1}{2}\left( {1 + {\xi _t}\exp \left( {j{\phi _t}} \right)} \right)$, ${v_t} = \frac{1}{2}\left( {1 - {\xi _t}\exp \left( { - j{\phi _t}} \right)} \right)$, ${\mu _r} = \frac{1}{2}\left( {1 + {\xi _r}\exp \left( { - j{\phi _r}} \right)} \right)$, ${v_r} = \frac{1}{2}\left( {1 - {\xi _r}\exp \left( {j{\phi _r}} \right)} \right)$, where ${\xi _{{t \mathord{\left/ {\vphantom {t r}} \right. \kern-\nulldelimiterspace} r}}}$ and ${\phi _{{t \mathord{\left/ {\vphantom {t r}} \right. \kern-\nulldelimiterspace} r}}}$ denote the amplitude and phase mismatch at TX and RX, respectively \cite{Xingwang2019Electronics}. For ideal conditions, the parameters are set to ${\xi _{{t \mathord{\left/
 {\vphantom {t r}} \right.
 \kern-\nulldelimiterspace} r}}} = 1$ and ${\phi _{{t \mathord{\left/
 {\vphantom {t r}} \right.
 \kern-\nulldelimiterspace} r}}} = 0^\circ $ \cite{48}.

 The entire data transmission is completed in three phases: 1) $S$ and relays harvest energy from nearby power beacon $B$; 2) $S$ transmits its own signals to ${R_m}$ and $E$; 3) ${R_m}$ decodes and forwards the signals to $D$ and $E$.

{\emph{\it 1) The first phase:}} In this phase, $S$ and $R_m$ are equipped with nonliear harvesters can reap energy from $B$. The harvested energy at $S$ is
\begin{equation}
\label{3}
{E_S} = {\varsigma _1}{P_B}{\left| {{h_{BS}}} \right|^2}\alpha T,
\end{equation}
where ${\varsigma _1} \in (0, 1)$ is the energy converse coefficient of harvester at $S$; ${P_B}$ is the transmitted power at $B$; ${h_{BS}}$ is the channel coefficient between $B$ and $S$; $\alpha $ is the time allocation factor for EH, and $T$ is the block transmission duration. The harvested energy ${E_S}$ is used for information transmission in the second phase. The transmit power ${P_S}$ can be expressed as follows in the case of the nonlinear energy harvester \cite{49}
\begin{equation}
\label{4}
{P_S} = \left\{ \begin{array}{l}
\frac{{2\alpha {\varsigma _1}{P_B}}}{{1 - \alpha }}{\left| {{h_{BS}}} \right|^2},{\rm{  if }}{P_B}{\left| {{h_{BS}}} \right|^2} \le {\Gamma _1}{\rm{ }}\\
\frac{{2\alpha {\varsigma _1}}}{{1 - \alpha }}{\Gamma _1},\;\;\;\;\;\;\;\;\;\;{\rm{           if }}{P_B}{\left| {{h_{BS}}} \right|^2} > {\Gamma _1}{\rm{ }}
\end{array} \right.,
\end{equation}
where ${\Gamma _1}$ is the saturation threshold of the harvester at $S$.

Similarly, the energy harvested at ${R_m}$ can be expressed as
\begin{equation}
\label{5}
{E_{{R_m}}} = {\varsigma _2}{P_B}{\left| {{h_{B{R_m}}}} \right|^2}\alpha T,
\end{equation}
where ${\varsigma _2} \in (0, 1) $ is the energy conversion coefficient of harvester at ${R_m}$, and ${h_{B{R_m}}}$ is the channel coefficient from $B$ to $R_m$. The harvested energy $E_{{R_m}}$ at relays is used for the information transmission in the third phase. In the presence of the nonlinear energy harvester, the transmit power at relay is given as follows
\begin{equation}
\label{6}
{P_{{R_m}}} = \left\{ \begin{array}{l}
\frac{{2\alpha {\varsigma _2}{P_B}}}{{1 - \alpha }}{\left| {{h_{B{R_m}}}} \right|^2},{\rm{  if }}{P_B}{\left| {{h_{B{R_m}}}} \right|^2} \le {\Gamma _2}{\rm{ }}\\
\frac{{2\alpha {\varsigma _2}}}{{1 - \alpha }}{\Gamma _2},\;\;\;\;\;\;\;\;\;\;\;\;\;{\rm{           if }}{P_B}{\left| {{h_{B{R_m}}}} \right|^2} > {\Gamma _2}{\rm{ }}
\end{array} \right.,
\end{equation}
where ${\Gamma _2}$ is the saturated threshold of the harvester at $R_m$ .

\emph{\it 2) The second phase:} In this phase, $S$ respectively sends the signals ${x_{S{R_m}}}$ and ${x_{SE}}$ to ${R_m}$ and $E$ with ${\rm E}\left\{ {{{\left| {{x_{S{R_m}}}} \right|}^2}} \right\} = {\rm E}\left\{ {{{\left| {{x_{SE}}} \right|}^2}} \right\} = 1$. Considering IQI and CEEs, the received signals at ${R_m}$ and $E$ can be all written as (7) at the top of next page,
\begin{figure*}[!t]\label{7}
\normalsize
\begin{align}
{y_j} &= {\mu _{{r_j}}}\!\left[ {\left( {{{\hat h}_j}\! +\! {e_j}} \right)\!\left( {{\mu _{{t_j}}}\sqrt {{P_{S/R_m}}} {x_j} \!+\! {v_{{t_j}}}\sqrt {{P_{S/R_m}}} x_j^*} \right)\! + \! {n_j}}\right]\!+\! {v_{{r_j}}}{\left[ {\left( {{{\hat h}_j} \!+ \!{e_j}} \right)\left( {{\mu _{{t_j}}}\sqrt {{P_{S/R_m}}} {x_j} \!+\! {v_{{t_j}}}\sqrt {{P_{S/R_m}}} x_j^*}\right) \!+ \! {n_j}} \right]^*},
\end{align}
\hrulefill \vspace*{0pt}
\end{figure*}
\noindent{where ${\hat h_{S{R_m}}}$ and ${\hat h_{SE}}$ are the estimated channel coefficients from transmitter to receiver; ${n_{S{R_m}}} \sim \mathcal{CN}\left( {0,{N_{S{R_m}}}} \right)$ and ${n_{SE}} \sim \mathcal{CN}\left( {0,{N_{SE}}} \right)$ are the complex additive white Gaussian noise (AWGN). }

\emph{\it 3) The third phase: }In the third phase, ${R_m}$ respectively sends the signals ${x_{{R_m}D}}$, ${x_{{R_m}E}}$ to $D$ and $E$ with ${\rm E}\left\{ {{{\left| {{x_{{R_m}D}}} \right|}^2}} \right\} $ = ${\rm E}\left\{ {{{\left| {{x_{{R_m}E}}} \right|}^2}} \right\} = 1$. Similarly, the received signals at $D$ and $E$ can be expressed as (7) at the top of next page.{\footnote{Note that $j \in \left\{ {S{R_m},SE,{R_m}D,{R_m}E} \right\}$, and ${P_S}$ and ${P_{{R_m}}}$ are the power from $S$ and ${{R_m}}$, respectively.}}

Hence, the received signal-to-interference-plus-noise ratio (SINRs) at $R_m$, $D$ and $E$ can be expressed in a unified form as
\begin{equation}
\label{8}
{\gamma _j} = \frac{{{{\left| {{{\hat h}_j}} \right|}^2}{\rho _j}{p_j}}}{{\sigma _{{e_j}}^2{\rho _j}{p_j} + {{\left| {{{\hat h}_j}} \right|}^2}{\rho _j}{q_j} + \sigma _{{e_j}}^2{\rho _j}{q_j} + {g_j}}},
\end{equation}
where ${\rho _j} = {P_{S/R}}/{{{N_j}}}$, ${p_j} = {\left| {{\mu _{{t_j}}}{\mu _{{r_j}}} + v_{{t_j}}^*{v_{{r_j}}}} \right|^2}$, ${q_j} = {\left| {{\mu _{{r_j}}}{v_{{t_j}}} + \mu _{{t_j}}^*{v_{{r_j}}}} \right|^2}$ and ${g_j} = {\left| {{\mu _{{r_j}}} + {v_{{r_j}}}} \right|^2}$.

According to the Shannon's theorem, the channel capacity can be expressed as follows
\begin{equation}
\label{9}
{C_j} = \frac{{1 - \alpha }}{2}{\log _2}\left( {1 + {\gamma _j}} \right),
\end{equation}
where the factor $\frac{{1 - \alpha }}{2}$ means the data transmission is accomplished in equal two phases.

With DF protocol, the effective end-to-end capacity from $S$ to $D$ can be expressed as
\begin{equation}
\label{10}
{C_R} = \min \left( {{C_{S{R_m}}},{C_{{R_m}D}}} \right).
\end{equation}

\section{Performance Analysis}
This section analyzes the reliability and security of considered system in the presence of nonlinear energy harvester, IQI and ICSI. The closed-form expressions for the OP under the RRS, SRS, ORS schemes and IP under direct transmission and relay transmission strategies are derived.\footnote{The reliability and security are another metrics to characterize the PLS of wireless communication systems without using any secrecy coding, which are formulated by OP and IP \cite{7579030}.} Moreover, the asymptotic behaviors for the OP are analyzed, as well as the diversity orders.

\vspace{-4mm}
\subsection{Outage Probability Analysis}

In this subsection, the expressions for the OP are presented according the three RS strategies considered IQI, ICSI and nonlinear energy harvesters. The OP is defined as the probability that effective channel capacity is below the threshold ${R_{th}}$, which can be expressed as
\begin{equation}
\label{11}
{P_{out}} \buildrel \Delta \over = \Pr \left\{ {{C_R} < {R_{th}}} \right\}.
\end{equation}

\subsubsection{Random relay selection}

For RRS strategy, the link between $S$ and arbitrary one of the relay ${R_m}$ is selected, and the effective rate can be obtained as
\begin{equation}
\label{12}
{C_{{R_m}}} = \min \left( {{C_{S{R_m}}},{C_{{R_m}D}}} \right).
\end{equation}

Based on the above discussion, we can obtain the analytical expression for the OP of the RRS strategy in the following theorem.
\begin{theorem}\label{theorem:1} The analytical expression for the OP of RRS strategy is provided in (13) as shown at the top of next page.
\begin{figure*}[!t]\label{13}
\normalsize
\begin{align}\nonumber
\label{16}
&P_{out}^{RRS} \!=\! 1\! -\!\! \left(\! {\frac{{{\lambda \!_{S\!{R\!_m}}}}}{{{C\!_1}}}{e^{ -\! \frac{{{\lambda \!_{S\!{R\!_m}}}{C\!_2}}}{{{C\!_1}}}}}\!\left[\! {\sqrt {\!\frac{{{\beta \!_1}}}{{{\gamma \!_1}}}} {{\rm K}\!_1}\!\left(\! {\sqrt {{\beta \!_1}{\gamma \!_1}} } \!\right) \!-\! \frac{{\pi\! {\Lambda \!_1}}}{{2{Y\!_1}}}\!\sum\limits_{{l\!_1} \!= \!0}^{{Y\!_1}} {{e^{ - \!\frac{{2{\lambda \!_{B\!S}}{g\!_{S\!R\!_m}}{N\!_{S\!{R\!_m}}}\!\varepsilon }}{{{\Lambda \!_1}\!\left(\! {{\delta \!_{{l\!_1}}}\! +\! 1}\! \right)}}\! -\! \frac{{{\lambda \!_{S\!{R\!_m}}}{\Lambda \!_1}\!\left(\! {{\delta \!_{{l\!_1}}} \!+\! 1} \!\right)}}{{2{C\!_1}}}}}} \!\sqrt {1\! - \!\delta \!_{{l\!_1}}^2} }\!\right] \!+\! {e^{{\lambda \!_{B\!S}}{E\!_1}}}\!\left( {{e^{ -\!{\lambda \!_{S\!{R\!_m}}}{\Theta \!_1}}} \!- \!{e^{ -\! {\lambda \!_{S\!{R\!_m}}}{T\!_1}}}} \!\right)}\! \right)\\
&\;\;\; \times\! \left(\! \!{\frac{{{\lambda \!_{{R\!_m}\!D}}}}{{{C\!_3}}}{e^{ -\! \frac{{{\lambda \!_{{R\!_m}\!D}}{C\!_4}}}{{{C\!_3}}}}}\!\!\!\left[\! {\sqrt {\!\frac{{{\beta\! _2}}}{{{\gamma\! _2}}}} {{\rm K}\!_1}\!\sqrt {\!{\beta\! _2}{\gamma\! _2}}  \!- \!\frac{{\pi\! {\Lambda \!_2}}}{{2{Y\!_2}}}\!\sum\limits_{{l\!_2} \!=\! 0}^{{Y\!_2}} {{e^{ -\! \frac{{2{\lambda \!_{B\!{R\!_m}}}{g\!_{R\!_m\!D}}{N\!_{{R\!_m}\!D}}\!\varepsilon }}{{{\Lambda \!_2}\!\left( {{\delta \!_{{l\!_2}}} \!+\! 1} \!\right)}} \!-\! \frac{{{\lambda \!_{{R\!_m}\!D}}{\Lambda \!_2}\!\left(\! {{\delta \!_{{l\!_2}}}\! +\! 1} \!\right)}}{{2{C\!_3}}}}}} \!\!\sqrt {1 \!\!-\! \delta \!_{{l\!_2}}^2} } \!\right] \!\!+\! {e^{ -\! {\lambda\! _{B\!{R\!_m}}}\!{E\!_2}}}\!\!\left( {{e^{ -\! {\lambda \!_{{R\!_m}\!D}}\!{\Theta \!_2}}} \!\!-\!\! {e^{ -\! {\lambda \!_{{R\!_m}\!D}}{T\!_3}}}}\! \right)} \!\!\right),
\end{align}
\hrulefill \vspace*{0pt}
\end{figure*}
\end{theorem}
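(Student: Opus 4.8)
The plan is to strip away the two layers of randomness one at a time: first reduce the end-to-end outage to two independent per-hop success events, then average over the data-channel fading, and finally average over the harvested-power fading. Since the DF rate in (12) is $C_{R_m}=\min\left(C_{SR_m},C_{R_mD}\right)$ and the two hops fade independently, I would first write
\[
P_{out}^{RRS}=1-\Pr\left\{C_{SR_m}\ge R_{th}\right\}\Pr\left\{C_{R_mD}\ge R_{th}\right\}.
\]
By the capacity formula (9), each event $C_j\ge R_{th}$ is equivalent to $\gamma_j\ge\varepsilon$ with the common threshold $\varepsilon=2^{2R_{th}/(1-\alpha)}-1$. This factorizes the whole problem into a single per-hop success probability $\Pr\left\{\gamma_j\ge\varepsilon\right\}$, and the two factors in (13) will share the same structure up to relabelling of the parameters.

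Next, for a fixed hop I would substitute the SINR (8) into $\gamma_j\ge\varepsilon$ and solve the inequality for the estimated channel gain $\left|\hat h_j\right|^2$. Because $\gamma_j$ is increasing in $\left|\hat h_j\right|^2$ but tends to the ceiling $p_j/q_j$ as $\left|\hat h_j\right|^2\to\infty$, this produces a feasibility condition $p_j>\varepsilon q_j$ — outside which outage is certain, which is exactly the IQI-induced floor anticipated in the abstract — and, when it holds, a threshold of the form $\left|\hat h_j\right|^2\ge a_j+b_j/\rho_j$ whose constants collect $\sigma_{e_j}^2,p_j,q_j,g_j$ and $\varepsilon$. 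Since $\left|\hat h_j\right|^2$ is exponential, conditioning on $\rho_j$ gives the conditional success probability $\exp\!\left(-\lambda_j\left(a_j+b_j/\rho_j\right)\right)$.

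The real work lies in removing the conditioning on $\rho_j$, which is random because the transmit SNR inherits the nonlinear harvested power (4)/(6) through $\left|h_{BS}\right|^2$ (resp.\ $\left|h_{BR_m}\right|^2$). I would split on the saturation event. In the unsaturated regime $P_B\left|h_{BS}\right|^2\le\Gamma_1$ the power is linear in $\left|h_{BS}\right|^2$, so $\rho_j\propto\left|h_{BS}\right|^2$ and the term $b_j/\rho_j$ becomes proportional to $1/\left|h_{BS}\right|^2$; averaging against the exponential density of $\left|h_{BS}\right|^2$ then produces a finite-limit integral of the form $\int_0^{\Gamma_1/P_B}e^{-\lambda_{BS}y-c/y}\,dy$. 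In the saturated regime the power is a constant, so $\rho_j$ is deterministic and the average over $\left|h_{BS}\right|^2>\Gamma_1/P_B$ collapses to the exponential terms, i.e.\ the $e^{\pm\lambda_{B\cdot}E}\left(e^{-\lambda\Theta}-e^{-\lambda T}\right)$ contributions in (13).

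Finally, the finite-limit integral has no elementary primitive, so I would evaluate it as $\int_0^\infty-\int_{\Gamma_1/P_B}^\infty$. The full integral $\int_0^\infty e^{-\lambda_{BS}y-c/y}\,dy=2\sqrt{c/\lambda_{BS}}\,{\rm K}_1\!\left(2\sqrt{\lambda_{BS}c}\right)$ by the standard Bessel identity, yielding the $\sqrt{\beta/\gamma}\,{\rm K}_1\!\left(\sqrt{\beta\gamma}\right)$ factors, while the tail is approximated by a Gauss--Chebyshev quadrature after mapping the interval to $[-1,1]$, giving the $\tfrac{\pi\Lambda}{2Y}\sum_l(\cdots)\sqrt{1-\delta_l^2}$ sums. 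Assembling the Bessel term, the quadrature correction and the saturated exponentials for each hop, then multiplying the two hops and subtracting from one, reproduces (13). The main obstacle is the coupling between the harvesting channel and the data channel through the piecewise power law: it forces the regime split and turns a one-dimensional CDF evaluation into a two-dimensional average carrying a $1/y$ singularity in the exponent, and correctly matching the saturation boundary across the Bessel, quadrature and exponential pieces while tracking the feasibility condition $p_j>\varepsilon q_j$ is where errors are most likely to arise.
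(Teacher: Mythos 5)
Your proposal is correct and follows the same skeleton as the paper's Appendix A: factorize the DF outage into two independent per-hop success probabilities, translate $C_j > R_{th}$ into a threshold on $|\hat h_j|^2$ (with the feasibility condition $p_j > \varepsilon q_j$, which the paper only states after the theorem), split on the harvester's saturation event, and evaluate the resulting $e^{-ay-b/y}$-type integral via the Bessel identity (G\&R 3.324.1) plus Gauss--Chebyshev quadrature. The one substantive difference is the order of integration in the unsaturated regime: you condition on the harvesting gain $|h_{BS}|^2$ and integrate it over $[0,E_1]$, whereas the paper keeps the data-channel gain as the outer variable, integrating $e^{-\lambda_{SR_m}y-\lambda_{BS}g_{SR_m}N_{SR_m}\varepsilon/(C_1y-C_2)}$ over $[T_1,\infty)$ and splitting, after the substitution $x=C_1y-C_2$, into $\int_0^\infty-\int_0^{\Lambda_1}$. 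Both orders yield the identical Bessel term $\sqrt{\beta_1/\gamma_1}\,{\rm K}_1(\sqrt{\beta_1\gamma_1})$, as they must by Fubini, but the residual piece differs: the paper's is already a finite integral on $[0,\Lambda_1]$ to which the quadrature rule (A7) applies directly, while yours is a tail $\int_{E_1}^\infty$ that needs an extra inversion substitution (introducing a Jacobian factor) before Chebyshev quadrature can be used, so the correction sum you would obtain is a different, equally legitimate approximation of the same exact probability and will not coincide term-by-term with the sum printed in (13). Two small bookkeeping points in your favor: your threshold $\varepsilon=2^{2R_{th}/(1-\alpha)}-1$ is the correct one (the appendix drops the $-1$), and the factor $e^{\lambda_{BS}E_1}\left(e^{-\lambda_{SR_m}\Theta_1}-e^{-\lambda_{SR_m}T_1}\right)$ in (13) should carry the negative exponent $e^{-\lambda_{BS}E_1}$, as is clear from combining $M_2$ with the $\varphi_1$ contribution of $M_1$.
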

\noindent{where ${A_1} = \frac{{2\alpha {\varsigma _1}}}{{1 - \alpha }}$, ${E_1} = \frac{{{\Gamma _1}}}{{{P_B}}}$, ${C_1} = {A_1}{P_B}\left( {{p_{SR_m}} - {q_{SR_m}}\varepsilon } \right)$, ${C_2} = \sigma _{{e_{S{R_m}}}}^2{A_1}{P_B}\varepsilon \left( {{p_{SR_m}}{\rm{ + }}{q_{SR_m}}} \right)$, ${T_1} = \frac{{{g_{SR_m}}{N_{S{R_m}}}\varepsilon }}{{{C_1}{E_1}}} + \frac{{{C_2}}}{{{C_1}}}$, ${\beta _1} = 4{\lambda _{BS}}{g_{SR_m}}{N_{S{R_m}}}\varepsilon $, ${\gamma _1} = \frac{{{\lambda _{S{R_m}}}}}{{{C_1}}}$, ${\Lambda _1} = {C_1}{T_1} - {C_2}$, ${\delta _{{l_1}}} = \cos \left[ {\frac{{\left( {2{l_1} - 1} \right)\pi }}{{2{Y_1}}}} \right]$, ${\Theta _1} = \frac{{\varepsilon \sigma _{{e_{S{R_m}}}}^2{A_1}{\Gamma _1}\left( {{p_{SR_m}} + {q_{SR_m}}} \right) + \varepsilon {g_{SR_m}}{N_{S{R_m}}}}}{{{A_1}{\Gamma _1}\left( {{p_{SR_m}} - \varepsilon {q_{SR_m}}} \right)}}$, ${A_2} = \frac{{2\alpha {\varsigma _2}}}{{1 - \alpha }}$, ${E_2} = \frac{{{\Gamma _2}}}{{{P_B}}}$, ${C_3} = {A_2}{P_B}\left( {{p_{R_mD}} - {q_{R_mD}}\varepsilon } \right)$, ${C_4} = \sigma _{{e_{{R_m}D}}}^2{A_2}{P_B}\varepsilon \left( {{p_{R_mD}}{\rm{ + }}{q_{R_mD}}} \right)$, ${T_3} = \frac{{{g_{R_mD}}{N_{{R_m}D}}\varepsilon }}{{{C_3}{E_2}}} + \frac{{{C_4}}}{{{C_3}}}$, ${\beta _2} = 4{\lambda _{B{R_m}}}{g_{R_mD}}{N_{{R_m}D}}\varepsilon $, ${\gamma _2} = \frac{{{\lambda _{{R_m}D}}}}{{{C_3}}}$, ${\Lambda _2} = {C_3}{T_3} - {C_4}$, ${\delta _{{l_2}}} = \cos \left[ {\frac{{\left( {2{l_2} - 1} \right)\pi }}{{2{Y_2}}}} \right]$ and ${\Theta _2} = \frac{{\varepsilon \sigma _{{e_{{R_m}D}}}^2{A_2}{\Gamma _2}\left( {{p_{R_mD}} + {q_{R_mD}}} \right) + \varepsilon {g_{R_mD}}{N_{{R_m}D}}}}{{{A_2}{\Gamma _2}\left( {{p_{R_mD}} - \varepsilon {q_{R_mD}}} \right)}}$.}

For $\varepsilon  < {1 \mathord{\left/
 {\vphantom {1 {\max \left\{ {\frac{{{p_{S{R_m}}}}}{{{q_{S{R_m}}}}},\frac{{{p_{{R_m}D}}}}{{{q_{{R_m}D}}}}} \right\}}}} \right.
 \kern-\nulldelimiterspace} {\max \left\{ {\frac{{{p_{S{R_m}}}}}{{{q_{S{R_m}}}}},\frac{{{p_{{R_m}D}}}}{{{q_{{R_m}D}}}}} \right\}}}$, otherwise the OP expressions are equal to 1.
\begin{proof}
See Appendix A.
\end{proof}

To get deeper insights, the asymptotic behavior of non-ideal conditions $(\sigma _{{e_{S{R_m}}}}^2 = \sigma _{{e_{{R_m}D}}}^2 = t)$ is investigated at high SNRs in the following corollary.

\begin{corollary}
The asymptotic expression of OP for RRS strategy under non-ideal conditions $(\sigma _{{e_{S{R_m}}}}^2 = \sigma _{{e_{{R_m}D}}}^2 = t)$ is given by
\begin{equation}\label{14}
P_{out}^{RRS,\infty} = 1 - {e^{ - {\lambda _{S{R_m}}}{H_1} - {\lambda _{{R_m}D}}{H_2}}},
\end{equation}
{where ${H_1} = {{\varepsilon \sigma _{{e_{S{R_m}}}}^2\left( {{p_{SR_m}} + {q_{SR_m}}} \right)}}/{{{p_{SR_m}} - \varepsilon {q_{SR_m}}}}$ and ${H_2} = {{\varepsilon \sigma _{{e_{{R_m}D}}}^2\left( {{p_{R_mD}} + {q_{R_mD}}} \right)}}/{{({p_{R_mD}} - \varepsilon {q_{R_mD}}})}$.}
\end{corollary}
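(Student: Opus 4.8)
The plan is to bypass the heavy closed form (13) and read the high-SNR behaviour directly off the SINR in (8), since the error floor here is governed entirely by the channel-estimation variance and the IQI coefficients rather than by the harvested power. First I would recast the outage event. From (9)--(12), $P_{out}^{RRS}=\Pr\{\min(C_{SR_m},C_{R_mD})<R_{th}\}$, and since $C_j=\frac{1-\alpha}{2}\log_2(1+\gamma_j)$ is strictly increasing in $\gamma_j$, this is equivalent to $\Pr\{\min(\gamma_{SR_m},\gamma_{R_mD})<\varepsilon\}$ with the threshold $\varepsilon=2^{2R_{th}/(1-\alpha)}-1$. Because the two hops involve the independent estimated channels $\hat h_{SR_m}$ and $\hat h_{R_mD}$, I would factor the complementary event as $P_{out}^{RRS}=1-\Pr\{\gamma_{SR_m}\ge\varepsilon\}\,\Pr\{\gamma_{R_mD}\ge\varepsilon\}$.

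Next I would take the limit $\rho_j\to\infty$ inside (8). Dividing numerator and denominator by $\rho_j$, the only term not scaled by $\rho_j$ is the noise term $g_j$, so $g_j/\rho_j\to0$ and the SINR saturates to the floor $\gamma_j^\infty=\frac{|\hat h_j|^2 p_j}{\sigma_{e_j}^2 p_j+|\hat h_j|^2 q_j+\sigma_{e_j}^2 q_j}$, which is independent of $\rho_j$; this saturation is exactly what prevents the OP from vanishing. Solving $\gamma_j^\infty<\varepsilon$ for $|\hat h_j|^2$ gives $|\hat h_j|^2(p_j-\varepsilon q_j)<\varepsilon\sigma_{e_j}^2(p_j+q_j)$, i.e. $|\hat h_j|^2<\frac{\varepsilon\sigma_{e_j}^2(p_j+q_j)}{p_j-\varepsilon q_j}$, which equals $H_1$ for the $S$--$R_m$ hop and $H_2$ for the $R_m$--$D$ hop. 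This inversion is valid precisely in the regime $p_j>\varepsilon q_j$ that is stated as the validity condition for Theorem 1; outside it the floor lies below $\varepsilon$ for every channel realisation and the OP equals $1$.

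Finally, using that under Rayleigh fading $|\hat h_j|^2$ is exponential with CDF $F_{|\hat h_j|^2}(x)=1-e^{-\lambda_j x}$, I obtain $\Pr\{\gamma_j^\infty\ge\varepsilon\}=e^{-\lambda_j H}$, hence $\Pr\{\gamma_{SR_m}^\infty\ge\varepsilon\}=e^{-\lambda_{SR_m}H_1}$ and $\Pr\{\gamma_{R_mD}^\infty\ge\varepsilon\}=e^{-\lambda_{R_mD}H_2}$. Substituting into the product form yields $P_{out}^{RRS,\infty}=1-e^{-\lambda_{SR_m}H_1-\lambda_{R_mD}H_2}$, as claimed. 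As a cross-check one may instead push $\rho_j\to\infty$ (equivalently $N_j\to0$) inside the exact expression (13): the prefactor times $\sqrt{\beta_1/\gamma_1}\,{\rm K}_1(\sqrt{\beta_1\gamma_1})$ tends to $1$ via the small-argument behaviour ${\rm K}_1(x)\sim1/x$, the Gauss--Chebyshev sum vanishes because $\Lambda_1\to0$, and the saturated-regime difference vanishes because $\Theta_1,T_1\to H_1$, leaving exactly $e^{-\lambda_{SR_m}H_1}$ for the first factor and symmetrically $e^{-\lambda_{R_mD}H_2}$ for the second. I expect the main obstacle to be conceptual rather than computational: one must recognise that ``high SNR'' does not drive the outage to zero but to a nonzero floor, so the crux is correctly isolating the saturated SINR $\gamma_j^\infty$ and verifying the sign condition $p_j>\varepsilon q_j$; once that is in hand, the probability evaluation is a one-line exponential-CDF computation.
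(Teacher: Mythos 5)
Your proposal is correct and follows essentially the same route as the paper: the authors likewise drop the noise term $g_j$ to obtain the saturated SINR (their asymptotic capacities (15)--(16)), factor the outage into a product over the two independent hops, and evaluate each factor via the exponential CDF of $\left| \hat h_j \right|^2$, which they leave implicit by citing the "methodology of Appendix A." Your explicit inversion yielding the thresholds $H_1,H_2$ and the sign condition $p_j>\varepsilon q_j$, plus the cross-check against the exact expression (13), merely fill in details the paper omits.
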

\begin{proof}
Based on (9), the asymptotic channel capacities of $S \to {R_m}$ and ${R_m} \to D$ can be written as
\begin{equation}\label{15}
C_{S\!{R\!_m}}^{\infty ,ni} \!\!=\!\! \frac{{1 \!\!-\!\! \alpha }}{2}{\log \!_2}\!\!\left(\!\! {1\!\! +\!\! \frac{{{{\left| {{{\hat h}\!_{S\!{R\!_m}}}} \!\right|}^2}{p\!_{S\!R\!_m}}}}{{\sigma \!_{{e\!_{S\!{R\!_m}}}}^2\!{p\!_{S\!R\!_m}} \!\!+\!\! {{\left| {{{\hat h}\!_{S\!{R\!_m}}}} \right|}^2}\!\!{q\!_{S\!R\!_m}} \!\!+\!\! \sigma\! _{{e\!_{S\!{R\!_m}}}}^2\!\!{q\!_{S\!R\!_m}}}}} \!\!\right),
\end{equation}
\begin{equation}\label{16}
C_{{R\!_m}\!D}^{\infty ,ni} \!\!=\!\! \frac{\!{1\!\! -\!\! \alpha }}{2}{\log \!_2}\!\!\left(\!\! {1\!\! +\!\! \frac{{{{\left| {{{\hat h}\!_{{R\!_m}\!D}}} \right|}^2}\!{p\!_{R\!_m\!D}}}}{{\sigma \!_{{e\!_{{R\!_m}\!D}}}^2\!{p\!_{R\!_m\!D}}\!\! +\! \!{{\left| {{{\hat h}\!_{{R\!_m}\!D}}} \right|}^2}\!{q\!_{R\!_m\!D}} \!\!+\!\! \sigma \!_{{e\!_{{R\!_m}\!D}}}^2\!{q\!_{R\!_m\!D}}}}} \!\!\right).
\end{equation}

By the definition of OP, the following expression can be obtained as
\begin{align}\label{17}\nonumber
P_{out}^{\infty ,ni} &= \Pr \left\{ {\min \left( {C_{S{R_m}}^{\infty ,ni},C_{{R_m}D}^{\infty ,ni}} \right) < {R_{th}}} \right\}\\
& = 1\! -\! \Pr \!\left\{ {C\!_{S\!{R\!_m}}^{\infty ,ni} \!>\! {R\!_{th}}} \!\right\}\!\Pr\! \left\{ {C\!_{{R\!_m}\!D}^{\infty ,ni} \!>\! {R\!_{th}}} \!\right\}.
\end{align}

Utilizing the similar methodology of Appendix A, (14) can be derived.
\end{proof}

Furthermore, the diversity order is investigated, which can be defined as \cite{8879698}:
\begin{equation}\label{18}
d =  - \mathop {\lim }\limits_{{\rho _j} \to \infty } \frac{{\log \left( {P_{out}^\infty } \right)}}{{\log {\rho _j}}},
\end{equation}
where $\rho _j$ is the average SNR and ${P_{out}^\infty }$ is the asymptotic OP.

\begin{corollary}
The diversity order of OP for RRS scheme in the presence of non-ideal conditions ($\sigma _{{e_{S{R_m}}}}^2 = \sigma _{{e_{{R_m}D}}}^2 = t$) can be obtained as follows:
\begin{equation}\label{19}
d_{RRS}^{ni}\left( {{\rho _{S{R_m}}},{\rho _{{R_m}D}}} \right) = 0.
\end{equation}
\end{corollary}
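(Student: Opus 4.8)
The plan is to read off the diversity order directly from the closed-form asymptotic OP in (14) together with the definition (18), so that the argument reduces to identifying the dominant scaling of $P_{out}^{RRS,\infty}$ in the average SNR $\rho_j$. First I would recall that the \emph{non-ideal} regime is precisely the first CEE model introduced in Section~\ref{sec2}, in which the estimation-error variance is a \emph{fixed} non-negative constant, $\sigma_{e_{S{R_m}}}^2 = \sigma_{e_{{R_m}D}}^2 = t$, that does \emph{not} decay with the transmit SNR. Consequently the quantities $H_1$ and $H_2$ defined below (14) are constants independent of $\rho_j$: each has the form $\varepsilon t (p+q)/(p-\varepsilon q)$, and the feasibility constraint $\varepsilon < \min\{ p_{S{R_m}}/q_{S{R_m}}, p_{{R_m}D}/q_{{R_m}D}\}$ guarantees that the denominators are strictly positive, so $H_1, H_2 > 0$.

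The key step is then to observe that $P_{out}^{RRS,\infty} = 1 - e^{-\lambda_{S{R_m}} H_1 - \lambda_{{R_m}D} H_2}$ contains no dependence on $\rho_j$ whatsoever; it is a fixed real number. Because $H_1, H_2 > 0$ and $\lambda_{S{R_m}}, \lambda_{{R_m}D} > 0$, the exponent is strictly negative, hence $0 < e^{-\lambda_{S{R_m}} H_1 - \lambda_{{R_m}D} H_2} < 1$, and therefore $0 < P_{out}^{RRS,\infty} < 1$. In other words, the asymptotic OP is a strictly positive constant — an irreducible \emph{error floor} — bounded away from both $0$ and $1$.

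Finally I would substitute into the definition (18). Since $P_{out}^{RRS,\infty}$ is a positive constant, $\log\!\left(P_{out}^{RRS,\infty}\right)$ is a finite nonzero constant, while $\log \rho_j \to \infty$ as $\rho_j \to \infty$. Hence the ratio tends to $0$ and $d_{RRS}^{ni} = -\lim_{\rho_j \to \infty} \log\!\left(P_{out}^{RRS,\infty}\right)/\log \rho_j = 0$, as claimed.

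There is essentially no analytical obstacle here; the only point that needs care is the \emph{interpretation} of the non-ideal regime — one must use the model in which $t$ is SNR-independent, so that $H_1$ and $H_2$ stay bounded away from zero. (Had $t$ instead scaled as $1/\rho_j$, as in the second CEE model, the floor would vanish and the diversity analysis would be entirely different.) I would also remark that the result is intuitive: a constant CEE injects a fixed interference-to-signal ratio that the receiver cannot eliminate by raising transmit power, so no diversity gain is achievable and $d=0$.
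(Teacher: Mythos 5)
Your argument is correct and follows the same route as the paper: the paper's proof is the one-line observation that the result "follows trivially by using (18)," i.e., since $P_{out}^{RRS,\infty}$ in (14) is a fixed constant independent of $\rho_j$ (as also noted in Remark 1), the ratio $\log\left(P_{out}^{RRS,\infty}\right)/\log\rho_j$ vanishes as $\rho_j\to\infty$. Your write-up simply fills in the details the paper leaves implicit (positivity of $H_1,H_2$ under the feasibility condition on $\varepsilon$, and the distinction between the two CEE models), which is a faithful elaboration rather than a different approach.
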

\begin{proof}
Follows trivially by using (18) and the definition of derivative.
\end{proof}

\begin{remark}\label{remark 1}
From \textbf{Theorem 1}, \textbf{Corollary 1} and \textbf{Corollary 2}, the following observations can be obtained as: 1) When $M$ increases gradually, it can be seen that (13) and (14) are independent of $M$, so the RRS scheme will not change with the increase or decrease of the number of antennas; 2) At high average SNR, $P_{out}^{RRS,\infty}$ is a fixed constant, which results in 0 diversity order. This means that the diversity order can not be improved by increasing the number of relays.
\end{remark}

\subsubsection{Suboptimal relay selection}

For SRS strategy, the optimal relay is selected according to maximizing the capacity of the link $S \to R_m $, which can be expressed as
\begin{equation}\label{18}
a = \arg \mathop {\max }\limits_{m = 1,2, \cdots M} {C_{S{R_m}}},
\end{equation}
\begin{equation}\label{19}
{C_{{R_a}}} = \min \left( {{C_{S{R_a}}},{C_{{R_a}D}}} \right).
\end{equation}

Based on (11) and (20), we have the following Theorem 2.
\begin{theorem}\label{theorem:2}The analytical expression of OP is provided for SRS strategy in (22) as shown at the top of next page.
\begin{figure*}[!t]\label{20}
\normalsize
\begin{align}\nonumber
\label{23}\nonumber
P_{out}^{SRS} &= 1 + \left( {\frac{\Xi }{{{C_5}}}{e^{ - \frac{{{\lambda _{S{R_a}}}\left( {s + 1} \right){C_6}}}{{{C_5}}}}}\left[ {\sqrt {\frac{{{\beta _3}}}{{{\gamma _3}}}} {{\rm K}_1}\left( {\sqrt {{\beta _3}{\gamma _3}} } \right) - \frac{{\pi {\Lambda _3}}}{{2{Y_3}}}\sum\limits_{{l_3} = 0}^{{Y_3}} {{e^{ - \frac{{2{\lambda _{BS}}{g_{SR_m}}{N_{S{R_a}}}\varepsilon }}{{{\Lambda _3}\left( {{\delta _{{l_3}}} + 1} \right)}} - \frac{{{\lambda _{S{R_a}}}\left( {s + 1} \right){\Lambda _3}\left( {{\delta _{{l_3}}} + 1} \right)}}{{2{C_5}}}}}\sqrt {1 - \delta _{{l_3}}^2} } } \right]} \right.\\\nonumber
&\;\;\;\;\left. { -\! \frac{\Xi }{{{\lambda\! _{S\!{R\!_a}}}\!\left(\! {s\! +\! 1} \!\right)}}{e^{ - \!{\lambda\! _{B\!S}}{E\!_1} \!-\! {\lambda \!_{S\!{R\!_a}}}\!\left(\! {s\! +\! 1} \!\right){T\!_5}}} \!- \!\left[\!{1 \!-\! {{\left(\! {1\!-\! {e^{ -\! {\lambda \!_{S\!{R\!_a}}}{\Theta\! _3}}}}\! \right)}^M}} \!\right]{e^{ -\! {\lambda \!_{B\!S}}{E\!_1}}}} \!\right) \!\times \left(\! {{e^{ -\! {\lambda \!_{B\!{R\!_a}}}{E\!_2}}}\left( {{e^{ - {\lambda _{{R_a}D}}{\Theta _4}}} - {e^{ - {\lambda _{{R_a}D}}{T_7}}}} \right) + } \right.\\
&\;\;\;\;\left. {\frac{{{\lambda _{{R_a}D}}}}{{{C_7}}}{e^{ - \frac{{{\lambda _{{R_a}D}}{C_8}}}{{{C_7}}}}}\left[ {\sqrt {\frac{{{\beta _4}}}{{{\gamma _4}}}} {{\rm K}_1}\sqrt {{\beta _4}{\gamma _4}}  - \frac{{\pi {\Lambda _4}}}{{2{Y_4}}}\sum\limits_{{l_4} = 0}^{{Y_4}} {{e^{ - \frac{{2{\lambda _{B{R_a}}}{g_{R_mD}}{N_{{R_a}D}}\varepsilon }}{{{\Lambda _4}\left( {{\delta _{{l_4}}} + 1} \right)}} - \frac{{{\lambda _{{R_a}D}}{\Lambda _4}\left( {{\delta _{{l_4}}} + 1} \right)}}{{2{C_7}}}}}} \sqrt {1 - \delta _{{l_4}}^2} } \right]} \right),
\end{align}
\hrulefill \vspace*{0pt}
\end{figure*}

\noindent{where $\Xi  =  - M{\lambda _{S{R_a}}}\sum\limits_{s = 0}^{M - 1} {\left( {\begin{array}{*{20}{c}}
{M - 1}\\
s
\end{array}} \right){{\left( { - 1} \right)}^s}} $, ${C\!_5} \!=\! {A\!_1}{P\!_B}\left( {{p_{SR_a}} - {q_{SR_a}}\varepsilon } \right)$, ${C_6} = \sigma _{{e_{S{R_a}}}}^2{A_1}{P_B}\varepsilon \left( {{p_{SR_a}} + {q_{SR_a}}} \right)$, ${T_5} = \frac{{{g_{SR_a}}{N_{S{R_a}}}\varepsilon }}{{{C_5}{E_1}}} + \frac{{{C_6}}}{{{C_5}}}$, ${\beta _3} = 4{\lambda _{BS}}{g_{SR_a}}{N_{S{R_a}}}\varepsilon $, ${\gamma _3} = \frac{{{\lambda _{S{R_a}}}\left( {s + 1} \right)}}{{{C_5}}}$, ${\Lambda _3} = {C_5}{T_5} - {C_6}$, ${\delta _{{l_3}}} = \cos \left[ {\frac{{\left( {2{l_3} - 1} \right)\pi }}{{2{Y_3}}}} \right]$, ${C_7} = {A_2}{P_B}\left( {{p_{R_aD}} - {q_{R_aD}}\varepsilon } \right)$, ${C_8} = \sigma _{{e_{{R_a}D}}}^2{A_2}{P_B}\varepsilon \left( {{p_{R_aD}} + {q_{R_aD}}} \right)$, ${T_7} = \frac{{{g_{R_aD}}{N_{{R_a}D}}\varepsilon }}{{{C_7}{E_2}}} + \frac{{{C_8}}}{{{C_7}}}$, ${\beta _4} = 4{\lambda _{B{R_a}}}{g_{R_aD}}{N_{{R_a}D}}\varepsilon $, ${\gamma _4} = \frac{{{\lambda _{{R_a}D}}}}{{{C_7}}}$, ${\Lambda _4} = {C_7}{T_7} - {C_8}$ and ${\delta _{{l_4}}} = \cos \left[ {\frac{{\left( {2{l_4} - 1} \right)\pi }}{{2{Y_4}}}} \right]$.}
\end{theorem}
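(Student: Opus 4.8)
The plan is to reduce the statement to the same integral machinery used for the RRS case in Appendix A, the only genuinely new ingredient being the maximum order statistic induced by the SRS rule (20). First I would convert the capacity event into an SINR event: since $C_j = \frac{1-\alpha}{2}\log_2(1+\gamma_j)$ is monotone in $\gamma_j$, the outage condition becomes $\gamma_j < \varepsilon$ with $\varepsilon = 2^{2R_{th}/(1-\alpha)}-1$. Because the relay index $a$ in (20) is chosen using only the $S\to R_m$ links, the selected relay's $R_a\to D$ link is statistically an ordinary R-D link and is independent of the S-R links; hence by the definitions (11) and (21),
\[
P_{out}^{SRS} = 1 - \Pr\{\gamma_{SR_a}\geq\varepsilon\}\,\Pr\{\gamma_{R_aD}\geq\varepsilon\},
\]
which already matches the product form $1+(\cdot)\times(\cdot)$ of (22) once the two survival factors are written out with their signs.

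Next I would evaluate each survival probability separately. For the $R_a\to D$ factor the computation is identical to the second bracket of \textbf{Theorem 1}: I condition on the nonlinear harvesting law (6), splitting into the linear branch $P_B|h_{BR_a}|^2\leq\Gamma_2$ and the saturated branch $P_B|h_{BR_a}|^2>\Gamma_2$. In the saturated branch the transmit power is the constant $A_2\Gamma_2$, so $\gamma_{R_aD}$ depends only on $|\hat h_{R_aD}|^2$ and the survival probability collapses to a difference of exponentials; in the linear branch $P_{R_a}$ is proportional to $|h_{BR_a}|^2$, so $\gamma_{R_aD}$ couples the data channel $|\hat h_{R_aD}|^2$ with the harvesting channel $|h_{BR_a}|^2$, forcing a double integral over two independent exponential variables. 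Using the identity $\int_0^\infty \exp(-a/x - bx)\,dx = 2\sqrt{a/b}\,{\rm K}_1(2\sqrt{ab})$ for the unbounded part produces the ${\rm K}_1$ terms, while the remaining finite-range integral has no elementary closed form, so I would approximate it by Gauss-Chebyshev quadrature, which supplies the $\sum_{l_4}$, the nodes $\delta_{l_4}$, and the $\sqrt{1-\delta_{l_4}^2}$ weights appearing in (22).

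For the $S\to R_a$ factor I would add the order-statistic layer on top of this same procedure. Since $\gamma_{SR_a}=\max_{m}\gamma_{SR_m}$ over i.i.d.\ links, the CDF of the maximum is $[F_{\gamma_{SR}}(\varepsilon)]^M$, and working with the density $M f_{\gamma_{SR}}[F_{\gamma_{SR}}]^{M-1}$ I would expand $[F_{\gamma_{SR}}]^{M-1}=(1-e^{-\lambda_{SR_a}(\cdot)})^{M-1}$ by the binomial theorem; this is exactly what generates the prefactor $\Xi=-M\lambda_{SR_a}\sum_{s=0}^{M-1}\binom{M-1}{s}(-1)^s$ and replaces $\lambda_{SR_a}$ by $\lambda_{SR_a}(s+1)$ throughout the S-R constants $\gamma_3,T_5,\Theta_3$. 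Repeating the two-branch harvesting split and the Bessel-plus-quadrature evaluation then yields the first large parenthesis of (22), including the $[1-(1-e^{-\lambda_{SR_a}\Theta_3})^M]$ contribution coming from the saturated branch of the maximum. Finally I would substitute the two factors back into the product and collect the auxiliary constants $C_5,\ldots,C_8,\beta_3,\beta_4,\Lambda_3,\Lambda_4$, which merely repackage the harvesting and IQI parameters.

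The step I expect to be the main obstacle is the linear-branch double integral in which the SINR is coupled to the harvesting channel through the piecewise power law: keeping the integration limits consistent with the admissibility constraint $\varepsilon < 1/\max\{p_{SR_m}/q_{SR_m},\, p_{R_mD}/q_{R_mD}\}$ (which guarantees $C_5,C_7>0$ and hence a finite outage region), and then cleanly separating the portion that integrates to a ${\rm K}_1$ Bessel function from the residual portion that must be handed to Gauss-Chebyshev quadrature, is where the bookkeeping is most delicate. The order-statistic expansion itself is routine once the single-link survival probability is in hand.
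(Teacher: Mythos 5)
Your proposal matches the paper's Appendix B essentially step for step: the factorization $P_{out}^{SRS}=1-I_3 I_4$ justified by the independence of the selection (which depends only on the $S\to R_m$ links) from the $R_a\to D$ link, the order-statistic density $Mf[F]^{M-1}$ expanded by the binomial theorem to produce $\Xi$ and the shift $\lambda_{SR_a}\to\lambda_{SR_a}(s+1)$, the split on the harvester's saturation threshold giving the $\bigl[1-(1-e^{-\lambda_{SR_a}\Theta_3})^M\bigr]$ term, and the evaluation of the linear-branch integral via the ${\rm K}_1$ identity plus Gauss--Chebyshev quadrature. The approach is correct and is the same as the paper's (your threshold $\varepsilon=2^{2R_{th}/(1-\alpha)}-1$ is in fact the standard form, whereas the paper writes it without the $-1$).
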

\begin{proof}
See Appendix B.
\end{proof}

Similarly, the asymptotic behavior of non-ideal conditions is studied of OP for SRS strategy in the high SNR regime.
\begin{corollary}
The asymptotic expression for the OP of SRS strategy under non-ideal conditions $(\sigma _{{e_{S{R_a}}}}^2 = \sigma _{{e_{{R_a}D}}}^2 = t)$ is given by
\begin{equation}\label{21}
P_{out}^{SRS,\infty } = 1 - \left( {1 - {{\left( {1 - {e^{ - {\lambda _{S{R_a}}}{H_3}}}} \right)}^M}} \right){e^{ - {\lambda _{{R_a}D}}{H_4}}},
\end{equation}
{where ${H_3} = {{\varepsilon \sigma _{{e_{S{R_a}}}}^2\left( {{p_{SR_a}} + {q_{SR_a}}} \right)}}/{{{p_{SR_a}} - \varepsilon {q_{SR_a}}}}$ and ${H_4} = {{\varepsilon \sigma _{{e_{{R_a}D}}}^2\left( {{p_{R_aD}} + {q_{R_aD}}} \right)}}/{({{p_{R_aD}} - \varepsilon {q_{R_aD}}})}$.}
\end{corollary}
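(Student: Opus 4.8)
The plan is to reproduce the high-SNR argument of Corollary 1, adapted to the SRS selection rule $a = \arg\max_{m} C_{S{R_m}}$. First I would take the limit $\rho_j \to \infty$ in the SINR (8): dividing numerator and denominator by $\rho_j$ drives the $g_j/\rho_j$ term to zero and leaves the saturated per-hop capacities of the form (15)--(16), now carrying the selected-relay subscripts $S{R_a}$ and ${R_a}D$. Substituting these into the outage definition (11) gives
\[
P_{out}^{SRS,\infty} = \Pr\left\{ \min\left( C_{S{R_a}}^{\infty,ni}, C_{{R_a}D}^{\infty,ni} \right) < R_{th} \right\}.
\]
Because the selection picks the strongest first hop, $C_{S{R_a}}^{\infty,ni} = \max_{m} C_{S{R_m}}^{\infty,ni}$, and passing to the complementary event yields
\[
P_{out}^{SRS,\infty} = 1 - \Pr\left\{ \max_{m} C_{S{R_m}}^{\infty,ni} \ge R_{th} \right\} \Pr\left\{ C_{{R_a}D}^{\infty,ni} \ge R_{th} \right\}.
\]

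The step I expect to be the main obstacle is justifying this factorization, since the index $a$ is random and appears to couple the two hops. I would settle it by conditioning on $\{a=k\}$: both $\{a=k\}$ and $\{\max_m C_{S{R_m}}^{\infty,ni} \ge R_{th}\}$ depend only on the first-hop channels, whereas $\{C_{{R_k}D}^{\infty,ni} \ge R_{th}\}$ depends only on the $k$-th relay-destination channel, which is independent of the first hop. Hence $\Pr\{C_{{R_k}D}^{\infty,ni} \ge R_{th}\}$ factors out of every term, and since the $h_{{R_m}D}$ are i.i.d.\ this value is common to all $k$; summing $\sum_k \Pr\{a=k, \max_m C_{S{R_m}}^{\infty,ni} \ge R_{th}\}$ then collapses back to $\Pr\{\max_m C_{S{R_m}}^{\infty,ni} \ge R_{th}\}$. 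This establishes the independence of the two factors and explains why the binomial-sum structure of the exact Theorem 2 degenerates to a clean product in the limit.

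With the factorization in hand the rest is routine. For one first hop I would convert $C_{S{R_m}}^{\infty,ni} < R_{th}$ into the SINR event $\gamma_{S{R_m}}^{\infty} < \varepsilon$ with $\varepsilon = 2^{2R_{th}/(1-\alpha)} - 1$, and solve the resulting linear inequality in $|\hat h_{S{R_m}}|^2$; under the admissibility condition $p_{SR_m} - \varepsilon q_{SR_m} > 0$ already imposed in Theorem 1, the inequality direction is preserved and gives $|\hat h_{S{R_m}}|^2 < H_3$ with $H_3$ as stated. Since $|\hat h_{S{R_m}}|^2$ is exponential with CDF $1 - e^{-\lambda_{S{R_m}} x}$, this yields $\Pr\{C_{S{R_m}}^{\infty,ni} < R_{th}\} = 1 - e^{-\lambda_{S{R_a}} H_3}$, so the i.i.d.\ maximum over the $M$ relays contributes $\Pr\{\max_m C_{S{R_m}}^{\infty,ni} \ge R_{th}\} = 1 - (1 - e^{-\lambda_{S{R_a}} H_3})^M$. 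Applying the identical reduction to the second hop gives $\Pr\{C_{{R_a}D}^{\infty,ni} \ge R_{th}\} = e^{-\lambda_{{R_a}D} H_4}$ with $H_4$ as stated, and multiplying the two factors reproduces the claimed expression.
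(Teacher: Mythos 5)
Your proposal is correct and follows essentially the route the paper intends: Corollary 3 is stated without proof, to be obtained by applying the high-SNR reduction of Corollary 1 together with the order-statistic CDF ${\left[ {1 - {e^{ - {\lambda _{S{R_a}}}y}}} \right]^M}$ from Appendix~B, which is exactly what you do. Your explicit conditioning argument justifying the factorization of the two hops (which the paper's Appendix~B simply asserts when writing $P_{out}^{SRS} = 1 - I_3 I_4$) is a sound addition rather than a deviation, and your threshold $\varepsilon = 2^{2R_{th}/(1-\alpha)} - 1$ is in fact the correct one.
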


Then, the diversity order of OP for SRS strategy under non-ideal conditions ($\sigma _{{e_{S{R_a}}}}^2 = \sigma _{{e_{{R_a}D}}}^2 = t$) is presented in the following corollary.
\begin{corollary}
The diversity order of OP for SRS scheme in the presence of non-ideal conditions ($\sigma _{{e_{S{R_m}}}}^2 = \sigma _{{e_{{R_m}D}}}^2 = t$) is given by:
\begin{equation}\label{24}
d_{SRS}^{ni}\left( {{\rho _{S{R_a}}},{\rho _{{R_a}D}}} \right) = 0.
\end{equation}
\end{corollary}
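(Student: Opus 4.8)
The plan is to read the result directly off the asymptotic outage expression of \textbf{Corollary 3} and substitute it into the diversity-order definition (18), exactly as was done for \textbf{Corollary 2}. First I would note that, under the non-ideal condition $\sigma_{e_{SR_a}}^2=\sigma_{e_{R_aD}}^2=t$ with $t$ a fixed positive constant, the two quantities $H_3=\varepsilon t\left(p_{SR_a}+q_{SR_a}\right)/\left(p_{SR_a}-\varepsilon q_{SR_a}\right)$ and $H_4=\varepsilon t\left(p_{R_aD}+q_{R_aD}\right)/\left(p_{R_aD}-\varepsilon q_{R_aD}\right)$ depend only on the IQI coefficients, on the threshold-induced parameter $\varepsilon$, and on $t$; none of these varies with the average SNR $\rho_j$. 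Hence $P_{out}^{SRS,\infty}$ is a constant with respect to $\rho_j$.

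Second, I would verify that this constant is a genuine, strictly positive error floor, so that its logarithm is finite and well defined. Under the feasibility condition of \textbf{Theorem 1} that keeps the denominators $p_{SR_a}-\varepsilon q_{SR_a}$ and $p_{R_aD}-\varepsilon q_{R_aD}$ positive, we have $H_3,H_4>0$, and together with $\lambda_{SR_a},\lambda_{R_aD}>0$ this forces $e^{-\lambda_{SR_a}H_3}\in(0,1)$ and $e^{-\lambda_{R_aD}H_4}\in(0,1)$. It follows that $1-\left(1-e^{-\lambda_{SR_a}H_3}\right)^M\in(0,1)$, and its product with $e^{-\lambda_{R_aD}H_4}$ again lies in $(0,1)$, so $P_{out}^{SRS,\infty}\in(0,1)$ is a fixed nonzero constant.

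Finally, substituting into (18), the numerator $\log\left(P_{out}^{SRS,\infty}\right)$ is a finite constant while $\log\rho_j\to\infty$, so the ratio vanishes in the limit and $d_{SRS}^{ni}\left(\rho_{SR_a},\rho_{R_aD}\right)=0$, as claimed. The argument is essentially mechanical; the only point deserving attention is the positivity check in the second step, which guarantees the error floor does not collapse to zero and hence that the limit is not of an indeterminate form. No real obstacle arises, and the conclusion mirrors the zero-diversity behaviour already observed for the RRS scheme in \textbf{Corollary 2}: because the CEE variance $t$ is held fixed rather than decaying with SNR, the residual interference imposes an SNR-independent outage floor that annihilates any diversity gain the relay selection might otherwise provide.
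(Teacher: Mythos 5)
Your argument is correct and matches the paper's (implicit) reasoning: the paper gives no explicit proof of this corollary, relying on the same observation made for the RRS case that the asymptotic OP in \textbf{Corollary 3} is a fixed constant independent of SNR, so the limit in (18) is trivially zero. Your added positivity check on $H_3$, $H_4$ ensuring the floor is a genuine nonzero constant is a small but welcome refinement of the same route.
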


\begin{remark}\label{remark 2}
From \textbf{Theorem 2}, \textbf{Corollary 3} and \textbf{Corollary 4}, we can obtain the following information as: 1) when the number of relay increases, it can be concluded from formulas (22) and (23) that the system's outage performance becomes better under the SRS strategy; 2) From expression (22), it can be obtained that when the $M$ is fixed and the transmit power at $B$ is in a high state, the OP will cause an error floor; 3) From (24), we can observe that the diversity order of the considered system is zero due to the fixed constant for the OP in the high SNR regime.
\end{remark}

\subsubsection{Optimal relay selection}
For ORS strategy, the optimal relay is selected according to maximize the capacity of the links both $S \to R_m$ and $R_m \to D$
\begin{equation}\label{22}
{m^ * } = \arg \mathop {\max }\limits_{1 \le m \le M} \min \left\{ {{C_{S{R_m}}},{C_{{R_m}D}}} \right\},
\end{equation}
\begin{equation}\label{23}
{C_{{R_{{m^ * }}}}} = \mathop {\max }\limits_{1 \le m \le M} {C_{{R_m}}}.
\end{equation}

According to (11) and (26), Theorem 3 can be obtained as following.
\begin{theorem}\label{theorem:3}The analytical expression of the OP is provided for the ORS strategy in (27) as shown at the top of next page.
\begin{figure*}[!t]\label{24}
\normalsize
\begin{align}\nonumber
\label{24}\nonumber
&P_{out}^{O\!R\!S}\! \!=\!\! \prod\limits_{m\! = \!1}^M {\!\left\{ {1\!\! -\! \!\left(\! {\frac{{{\lambda \!_{S\!{R\!_m}}}}}{{{C\!_1}}}{e^{ - \!\frac{{{\lambda \!_{S\!{R\!_m}}}{C\!_2}}}{{{C\!_1}}}}}\!\!\left[\! {\sqrt {\!\frac{{{\beta \!_1}}}{{{\gamma\! _1}}}} {{\rm K}\!_1}\!\!\left(\! {\sqrt {\!{\beta \!_1}{\gamma \!_1}} } \!\!\right) \!\!- \!\!\frac{{\pi\! {\Lambda \!_1}}}{{2{Y\!_1}}}\!\!\sum\limits_{{l\!_1} \!=\! 0}^{{Y\!_1}} {{e^{ -\! \frac{{2{\lambda \!_{B\!S}}{g\!_{S\!R\!_m}}{N\!_{S\!{R\!_m}}}\!\varepsilon }}{{{\Lambda \!_1}\!\left( \!{{\delta\! _{{l\!_1}}} \!+\! 1} \!\right)}} \!- \frac{{{\lambda \!_{S\!{R\!_m}}}{\Lambda\! _1}\!\left(\! {{\delta\! _{{l\!_1}}} \!+\! 1} \!\right)}}{{2{C\!_1}}}}}}\!\!\! \sqrt {1\! -\! \delta _{{l\!_1}}^2} } \!\right]\! +\! {e^{{\lambda \!_{B\!S}}{E\!_1}}}\left(\! {{e^{ -\! {\lambda \!_{S\!{R\!_m}}}\!{\Theta \!_1}}}\! -\! {e^{ -\! {\lambda \!_{S\!{R\!_m}}}\!{T\!_1}}}}\! \right)} \!\right)}\! \right.} \\
&\;\;\;\;\left. { \times\!\! \left(\! {\frac{{{\lambda \!_{{R\!_m}\!D}}}}{{{C\!_3}}}{e^{ -\! \frac{{{\lambda \!_{{R\!_m}\!D}}\!{C\!_4}}}{{{C\!_3}}}}}\!\!\left[\! {\sqrt {\!\!\frac{{{\beta \!_2}}}{{{\gamma \!_2}}}} {{\rm K}\!_1}\!\sqrt {{\beta \!_2}{\gamma \!_2}}  \!- \!\frac{{\pi\! {\Lambda \!_2}}}{{2{Y\!_2}}}\!\!\sum\limits_{{l\!_2} \!=\! 0}^{{Y\!_2}} {{e^{ -\! \frac{{2{\lambda \!_{B\!{R\!_m}}}\!{g\!_{R\!_m\!D}}{N\!_{{R\!_m}\!D}}\!\varepsilon }}{{{\Lambda \!_2}\!\left(\! {{\delta \!_{{l\!_2}}} \!+\! 1} \!\right)}} \!\!-\!\! \frac{{{\lambda\! _{{R\!_m}\!D}}\!{\Lambda \!_2}\!\left(\! {{\delta _{{l\!_2}}} \!+\! 1} \!\right)}}{{2{C\!_3}}}}}}\!\! \sqrt {\!1\! -\! \delta _{{l\!_2}}^2} } \!\right] \!\!+\!\! {e^{ -\! {\lambda \!_{B\!{R\!_m}}}\!{E\!_2}}}\!\!\left( \!{{e^{ -\! {\lambda \!_{{R\!_m}\!D}}\!{\Theta \!_2}}} \!-\! {e^{ -\! {\lambda \!_{{R\!_m}\!D}}\!{T\!_3}}}} \!\right)} \!\!\right)}\!\! \right\},
\end{align}
\hrulefill \vspace*{0pt}
\end{figure*}

\end{theorem}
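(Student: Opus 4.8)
The plan is to reduce the ORS outage probability to a product of the single-relay outage probabilities already established in Theorem~1. Starting from the definition (11) together with the selection rule (26), an outage of the ORS scheme occurs precisely when the best available relay fails to support the target rate, so that
\begin{equation}
P_{out}^{ORS}=\Pr\left\{\max_{1\le m\le M}C_{R_m}<R_{th}\right\}.
\end{equation}
The first step is the elementary observation that $\max_{1\le m\le M}C_{R_m}<R_{th}$ holds if and only if $C_{R_m}<R_{th}$ for every $m$; hence the outage event is the intersection $\bigcap_{m=1}^{M}\{C_{R_m}<R_{th}\}$.

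The second, and decisive, step is to factorise this joint probability. Treating the $M$ relay branches as mutually independent, I would write
\begin{equation}
P_{out}^{ORS}=\prod_{m=1}^{M}\Pr\left\{C_{R_m}<R_{th}\right\},
\end{equation}
and then recognise that each factor is nothing but the per-relay outage probability $\Pr\{\min(C_{SR_m},C_{R_mD})<R_{th}\}$, which is exactly the RRS quantity computed in closed form in Theorem~1, equation (13). Substituting (13) for the $m$-th relay into the product, with the index $m$ carried through all the auxiliary constants defined beneath Theorem~1, then reproduces (27) with no further integration required; in particular the Gauss--Chebyshev step of Appendix~A is applied once per relay before the product is formed.

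The main obstacle is the justification of the factorisation, because the source-side transmit power $P_S$ in (4) is governed by the single harvesting channel $h_{BS}$ that is common to all the $S\to R_m$ links, so the variables $\{C_{SR_m}\}_{m}$ are not strictly independent. The rigorous route is to condition on $h_{BS}$, factorise the conditional probability over the relay-specific fading $\hat h_{SR_m}$, $\hat h_{R_mD}$ and $h_{BR_m}$, and only then average; the product form (27) corresponds to treating each branch through its marginal outage (13), which is the modelling choice adopted here. I would therefore make this independence assumption explicit and, as in Theorem~1, verify that the validity condition $\varepsilon<1/\max\{p_{SR_m}/q_{SR_m},\,p_{R_mD}/q_{R_mD}\}$ holds for every $m$, since any relay violating it forces the corresponding factor, and hence the entire product, to equal unity.
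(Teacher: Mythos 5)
Your proposal matches the paper's own proof (Appendix C): the outage event $\{\max_{1\le m\le M} C_{R_m} < R_{th}\}$ is written as the intersection over $m$, factorised by independence into $\prod_{m=1}^{M}\left(1 - I_1 I_2\right)$, and each factor is then filled in with the per-relay quantities $I_1$ and $I_2$ already computed in Appendix~A for Theorem~1. Your caveat that the common harvesting channel $h_{BS}$ couples all the $S \to R_m$ branches, so the factorisation is only exact after conditioning on $h_{BS}$, is a valid observation that the paper itself silently glosses over.
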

\begin{proof}
See Appendix C.
\end{proof}

Next, the asymptotic behavior for the OP of ORS strategy in the presence of non-ideal conditions is studied.
\begin{corollary}
The asymptotic expression of OP for the ORS strategy under non-ideal conditions $(\sigma _{{e_{S{R_a}}}}^2 = \sigma _{{e_{{R_a}D}}}^2 = t)$ is given by
\begin{equation}\label{25}
P_{out}^{ORS,\infty} = \prod\limits_{i = 1}^M {\left( {1 - {e^{ - {\lambda _{S{R_m}}}{H_1} - {\lambda _{{R_m}D}}{H_2}}}} \right)} .
\end{equation}
\end{corollary}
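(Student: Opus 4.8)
The plan is to exploit the structure of the ORS selection rule to factorize the outage event across relays, and then to apply the per-relay high-SNR limit already established in \textbf{Corollary 1}. Since ORS selects $m^* = \arg\max_{1 \le m \le M}\min\{C_{SR_m},C_{R_mD}\}$, the effective end-to-end capacity is $C_{R_{m^*}} = \max_{1 \le m \le M} C_{R_m}$, so an outage occurs if and only if the best available relay still fails, i.e. $C_{R_m} < R_{th}$ for every $m$. First I would write
\begin{equation}
P_{out}^{ORS} = \Pr\left\{\max_{1 \le m \le M} C_{R_m} < R_{th}\right\} = \Pr\left\{\bigcap_{m=1}^M \{C_{R_m} < R_{th}\}\right\}.
\end{equation}

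Because the relay branches are mutually independent (the estimated gains $\hat h_{SR_m}$, $\hat h_{R_mD}$ and the harvesting channels $h_{BR_m}$ are independent across $m$), this probability of an intersection factorizes into a product of per-relay outage probabilities,
\begin{equation}
P_{out}^{ORS} = \prod_{m=1}^M \Pr\{C_{R_m} < R_{th}\} = \prod_{m=1}^M \Pr\{\min(C_{SR_m},C_{R_mD}) < R_{th}\}.
\end{equation}
Each factor here is precisely the single-relay outage probability derived for the RRS scheme in \textbf{Theorem 1}; this is exactly why the exact ORS result (27) is the product over $m$ of the same bracketed object that constitutes (13). Identifying the ORS outage as $M$ independent copies of the RRS outage is the conceptual crux, and it is what makes the asymptotic result follow so cheaply.

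Next I would pass to the high-SNR regime under the non-ideal conditions $\sigma_{e_{SR_m}}^2 = \sigma_{e_{R_mD}}^2 = t$. By \textbf{Corollary 1}, each per-relay factor tends to $1 - e^{-\lambda_{SR_m}H_1 - \lambda_{R_mD}H_2}$ as $\rho_j \to \infty$, with $H_1,H_2$ as defined there, since the asymptotic SINR approximation underlying (15)--(16) applies verbatim to every branch. Substituting this limit into each factor and using the continuity of the finite product then gives
\begin{equation}
P_{out}^{ORS,\infty} = \prod_{m=1}^M \left(1 - e^{-\lambda_{SR_m}H_1 - \lambda_{R_mD}H_2}\right),
\end{equation}
which is the claimed expression.

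There is essentially no separate obstacle in this corollary: its difficulty has already been absorbed into \textbf{Theorem 3} (the factorization arising from the order statistic of the $M$ independent end-to-end capacities) and \textbf{Corollary 1} (the per-link high-SNR limit). The only point that warrants care is making the independence argument explicit, since it is what licenses replacing the probability of the intersection by a product; once that is granted, the result is obtained simply by taking the limit of a single factor and raising it to the product over the $M$ relays.
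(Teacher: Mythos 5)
Your proposal is correct and follows exactly the route the paper implicitly takes: Appendix~C establishes $P_{out}^{ORS}=\prod_{m=1}^{M}\left(1-I_1 I_2\right)$ via the independence-based factorization of the ORS outage event, and Corollary~1 supplies the high-SNR limit $1-I_1 I_2 \to 1-e^{-\lambda_{SR_m}H_1-\lambda_{R_mD}H_2}$ for each factor. The only difference is that you make the cross-relay independence step explicit, which the paper leaves unstated.
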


\begin{corollary}
The diversity order of OP for ORS strategy under non-ideal conditions ($\sigma _{{e_{S{R_m}}}}^2 = \sigma _{{e_{{R_m}D}}}^2 = t$) is following:
\begin{equation}\label{29}
d_{ORS}^{ni}\left( {{\rho _{S{R_m}}},{\rho _{{R_m}D}}} \right) = 0.
\end{equation}
\end{corollary}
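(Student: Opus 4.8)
The plan is to feed the asymptotic OP established in \textbf{Corollary 5} into the diversity-order definition (18) and to show that this asymptotic OP is a strictly positive constant independent of the transmit SNR; once that is established, (18) forces the diversity order to be zero.

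First I would write out the asymptotic ORS expression (28), $P_{out}^{ORS,\infty} = \prod_{m=1}^{M}\left(1 - e^{-\lambda_{SR_m}H_1 - \lambda_{R_m D}H_2}\right)$, with $H_1$ and $H_2$ the constants introduced below (14) in \textbf{Corollary 1}. The decisive observation is that, under the fixed-variance CEE model $\sigma_{e_{SR_m}}^2 = \sigma_{e_{R_m D}}^2 = t$, the quantities $H_1$ and $H_2$ are assembled only from the fixed estimation-error variance $t$, the IQI coefficients $p_j,q_j$, and the threshold $\varepsilon$, none of which scale with $\rho_j$. Since the Rayleigh rate parameters $\lambda_{SR_m}$ and $\lambda_{R_m D}$ are likewise SNR-independent, every factor of the product is a fixed number strictly inside $(0,1)$, so $P_{out}^{ORS,\infty}$ reduces to a constant $c \in (0,1)$ — precisely the irreducible error floor noted in \textbf{Remark 2}.

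Substituting this constant into (18) then gives $d_{ORS}^{ni} = -\lim_{\rho_j \to \infty}\frac{\log c}{\log \rho_j}$. Because $\log c$ is a finite, nonzero constant whereas $\log \rho_j \to \infty$, the ratio tends to $0$, which yields $d_{ORS}^{ni}(\rho_{SR_m},\rho_{R_m D}) = 0$. This mirrors the derivations of \textbf{Corollary 2} and \textbf{Corollary 4} exactly; the only structural difference is the product over the $M$ relays, which alters the numerical value of the floor but not the fact that it is a positive constant.

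The single point demanding care is the SNR-independence of the floor. I would stress that it is specifically the fixed-constant estimation model that keeps $H_1$ and $H_2$ free of $\rho_j$; under the alternative SNR-decaying model $\sigma_{e_j}^2 = \Omega_j/(1+\delta\rho_j\Omega_j)$ of Section~II, the floor would vanish as $\rho_j \to \infty$ and this zero-diversity conclusion would no longer hold. With that assumption in place, the limit is entirely routine.
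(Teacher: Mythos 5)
Your proposal is correct and follows essentially the same route the paper intends: the paper proves the analogous RRS result (Corollary 2) by noting it "follows trivially by using (18)," and Corollary 6 is meant to follow identically by substituting the SNR-independent constant floor $P_{out}^{ORS,\infty}$ from Corollary 5 into the definition (18), exactly as you do. Your added emphasis that $H_1$, $H_2$, and the $\lambda$'s are free of $\rho_j$ under the fixed-variance CEE model is the right justification (note only that the error-floor observation for ORS appears in Remark 3, not Remark 2).
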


\begin{remark}\label{remark 3}
From \textbf{Theorem 3}, \textbf{Corollary 5}  and \textbf{Corollary 6}, we can get the following points as: 1) When $M$ increases, $P_{out}^{ORS}$  and $P_{out}^{ORS,\infty }$ will become smaller because, which means that the system's outage performance becomes better under the ORS strategy; 2) As $P_B$ goes to infinity, the OP of the considered system under non-ideal conditions has an error floor; 3) We can also observe that the diversity order is 0, which means that the slope of the outage probability is 0.
\end{remark}

\vspace{-4mm}
\subsection{Intercept Probability Analysis}

In this subsection, the secrecy performance of the multi-relay networks with IQI is studied in terms of IP considering two scenarios of direct transmission and transmission via relay. The definition of IP is the probability that the channel capacity between $S \to E$ or ${R_m} \to E$ is greater than the threshold ${R_{th}}$, which can be formulated as
\begin{equation}\label{26}
P_{{\rm{int}}}^{{\rm{direct/relay}}} \buildrel \Delta \over = \Pr \left\{ {{C_{SE/{R_c}E}} > {R_{th}}} \right\},
\end{equation}
where ${R_c}$ is the selected relay, ${C_{SE}}$ and ${C_{{R_c}E}}$ are the intercept capacities of $S \to E$ and ${R_c} \to E$, respectively.

\subsubsection{Direct Transmission}
Under the condition of direct transmission, based on (8) and the definition of (30), the closed-form analytical expression of IP under the condition of direct transmission can be obtained as Theorem 4.

\begin{theorem}\label{theorem 4}
The analytical expression of IP under the condition of direct transmission is provided in (31) as shown at the top of next page.
\begin{figure*}[!t]\label{31}
\normalsize
\begin{align}
\label{27}
P_{{\rm{int}}}^{{\rm{direct}}} \!\!=\!\!  -\! {e^{ -\! {\lambda \!_{B\!S}}\!{E\!_1} \!-\! {\lambda \!_{S\!E}}\!{T\!_9}}} \!\!+ \!\!\frac{{{\lambda \!_{S\!E}}}}{{{C\!_9}}}{e^{ -\! \frac{{{\lambda \!_{S\!E}}\!{C\!_{10}}}}{{{C\!_9}}}}}\!\!\left[\! {2\!\sqrt {\!\frac{{{\beta \!_5}}}{{{\gamma \!_5}}}} {{\rm K}\!_1}\!\left(\! {2\!\sqrt {\!{\beta \!_5}{\gamma \!_5}} } \!\right) \!\!-\!\! \frac{{\pi\! {\Lambda \!_5}}}{{2{Y\!_5}}}\!\!\sum\limits_{{l\!_5} \!=\! 0}^{{Y\!_5}} \!{{e^{ -\! \frac{{{\gamma \!_5}\!{\Lambda \!_5}\!\left( \!{{\delta \!_{{l\!_5}}} \!+ \!1\!} \!\right)}}{2} \!-\! \frac{{2{\beta\! _5}}}{{{\Lambda \!_5}\!\left( {{\delta\! _{{l\!_5}}}\!\! +\!\! 1}\! \right)}}}}\!\sqrt {1\! -\! \delta\! _{{l\!_5}}^2} } } \!\right] \!+\! {e^{ -\! {\lambda \!_{S\!E}}{\Theta\! _5}\! -\! {\lambda \!_{B\!S}}{E\!_1}}},
\end{align}
\hrulefill \vspace*{0pt}
\end{figure*}

\noindent{where ${C_9} = {A_1}{P_B}\left( {{p_{SE}} - {q_{SE}}\varepsilon } \right)$, ${C_{10}} = \sigma _{{e_{SE}}}^2{A_1}{P_B}\varepsilon \left( {{p_{SE}} + {q_{SE}}} \right)$, ${T_9} = {{\varepsilon {g_{SE}}{N_{SE}}}}/{{{C_9}{E_1}}} + {{{C_{10}}}}/{{{C_9}}}$, ${\beta _5} = {\lambda _{BS}}{g_{SE}}{N_{SE}}\varepsilon $, ${\gamma _5} = \frac{{{\lambda _{SE}}}}{{{C_9}}}$, ${\Lambda _5} = {C_9}{T_9} - {C_{10}}$, ${\delta _{{l_5}}} = \cos \left[ {{{\left( {2{l_5} - 1} \right)\pi }}/{{2{Y_5}}}} \right]$ and ${\Theta _5} = {{\varepsilon \sigma _{{e_{SE}}}^2{A_1}{\Gamma _1}\left( {{p_{SE}} + {q_{SE}}} \right) + \varepsilon {g_{SE}}{N_{SE}}}}/{{{A_1}{\Gamma _1}\left( {{p_{SE}} - \varepsilon {q_{SE}}} \right)}}$.}
\end{theorem}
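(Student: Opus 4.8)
The plan is to compute $P_{\rm int}^{\rm direct}=\Pr\{C_{SE}>R_{th}\}$ straight from the definition (30), reducing it to a statement about the two independent exponential variables $|\hat h_{SE}|^2$ and $|h_{BS}|^2$ that drive the eavesdropper SINR $\gamma_{SE}$ in (8). First I would invert the capacity law (9): $C_{SE}>R_{th}$ is equivalent to $\gamma_{SE}>\varepsilon$ with $\varepsilon\triangleq 2^{2R_{th}/(1-\alpha)}-1$. Substituting (8) and clearing the denominator — legitimate exactly when $p_{SE}-\varepsilon q_{SE}>0$, the parameter regime assumed throughout — turns this into a one-sided threshold on $|\hat h_{SE}|^2$ whose level depends on the harvested power $P_S$ through $C_9$ and $C_{10}$.

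Because $P_S$ is the piecewise quantity in (4), I would next condition on $\{P_B|h_{BS}|^2\le\Gamma_1\}=\{|h_{BS}|^2\le E_1\}$ and its complement, with $E_1=\Gamma_1/P_B$. On the saturated branch $P_S=A_1\Gamma_1$ is deterministic, the threshold collapses to the constant $\Theta_5$, and independence gives $\Pr\{|h_{BS}|^2>E_1\}\Pr\{|\hat h_{SE}|^2>\Theta_5\}=e^{-\lambda_{BS}E_1}e^{-\lambda_{SE}\Theta_5}$, the last term of (31). On the linear branch $P_S=A_1P_B|h_{BS}|^2$, so writing $x_{\rm th}(y)=\frac{C_{10}}{C_9}+\frac{\varepsilon g_{SE}N_{SE}}{C_9 y}$ and inserting the exponential CCDF and PDF the contribution is
\[
P_1=\lambda_{BS}e^{-\lambda_{SE}C_{10}/C_9}\int_0^{E_1}\exp\!\left(-\frac{\lambda_{SE}\varepsilon g_{SE}N_{SE}}{C_9 y}-\lambda_{BS}y\right)dy .
\]

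The heart of the proof is this finite-limit integral $\int_0^{E_1}e^{-a/y-\lambda_{BS}y}dy$ with $a=\lambda_{SE}\varepsilon g_{SE}N_{SE}/C_9$. I would split it as $\int_0^{\infty}-\int_{E_1}^{\infty}$. The complete integral is the standard form $\int_0^\infty e^{-a/y-by}dy=2\sqrt{a/b}\,{\rm K}_1(2\sqrt{ab})$, which under $\beta_5=\lambda_{BS}g_{SE}N_{SE}\varepsilon$ and $\gamma_5=\lambda_{SE}/C_9$ yields the Bessel term $\gamma_5 e^{-\lambda_{SE}C_{10}/C_9}\,2\sqrt{\beta_5/\gamma_5}\,{\rm K}_1(2\sqrt{\beta_5\gamma_5})$ of (31). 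The tail $\int_{E_1}^{\infty}$ has no elementary antiderivative, so I would integrate it by parts with $dv=e^{-\lambda_{BS}y}dy$; this peels off a boundary term $\frac{1}{\lambda_{BS}}e^{-a/E_1-\lambda_{BS}E_1}$ that, after the prefactors and the identity $\lambda_{SE}T_9=\lambda_{SE}C_{10}/C_9+a/E_1$, becomes exactly the leading negative term $-e^{-\lambda_{BS}E_1-\lambda_{SE}T_9}$ of (31), and leaves the residual $a\int_{E_1}^\infty y^{-2}e^{-a/y-\lambda_{BS}y}dy$. The substitution $t=1/y$ simultaneously cancels the $y^{-2}$ factor and sends $[E_1,\infty)$ to the finite interval $[0,1/E_1]$, producing $a\int_0^{1/E_1}e^{-at-\lambda_{BS}/t}dt$.

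The last step is Gaussian–Chebyshev quadrature on $[0,1/E_1]$ with nodes $t_{l_5}=\frac{\delta_{l_5}+1}{2E_1}$ and $\delta_{l_5}=\cos\frac{(2l_5-1)\pi}{2Y_5}$; one verifies $a\,t_{l_5}=\frac{\gamma_5\Lambda_5(\delta_{l_5}+1)}{2}$ and $\lambda_{BS}/t_{l_5}=\frac{2\beta_5}{\Lambda_5(\delta_{l_5}+1)}$ using $\Lambda_5=C_9 T_9-C_{10}=\varepsilon g_{SE}N_{SE}/E_1$ and $\beta_5=\lambda_{BS}\Lambda_5 E_1$, so the summand matches the exponential in (31) and the weights $\frac{\pi\Lambda_5}{2Y_5}\sqrt{1-\delta_{l_5}^2}$ together with the $\gamma_5$ prefactor deliver the subtracted quadrature sum. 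Summing the two branches reproduces (31). I expect the tail integral to be the main obstacle: performing integration by parts \emph{before} the change of variable is what lets the boundary term materialize as the leading term and the residual land cleanly on the stated summand — applying $t=1/y$ first leaves a stray $t^{-2}$ weight that does not match, so the order of these two operations and the careful tracking of the $E_1$-boundary term are the delicate points.
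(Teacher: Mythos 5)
Your derivation is correct and reproduces (31) exactly, but it takes the dual route to the one in Appendix D. The paper conditions on the saturation event and then integrates out the eavesdropper gain $x=|\hat h_{SE}|^2$ last: the inner probability over $|h_{BS}|^2$ yields the kernel $e^{-\lambda_{BS}\varepsilon g_{SE}N_{SE}/(C_9x-C_{10})}$, a linear substitution $u=C_9x-C_{10}$ turns the linear-branch term $\varphi_{10}$ into $\int_{\Lambda_5}^{\infty}e^{-\gamma_5u-\beta_5/u}\,du$, which is split as $\int_0^{\infty}-\int_0^{\Lambda_5}$ with (3.324.1) of Gradshteyn--Ryzhik for the complete part and Gauss--Chebyshev on the finite interval $[0,\Lambda_5]$ for the remainder; the boundary term $-e^{-\lambda_{BS}E_1-\lambda_{SE}T_9}$ falls out directly as $-\lambda_{SE}e^{-\lambda_{BS}E_1}\varphi_9$ from the CCDF difference $e^{-\lambda_{BS}T}-e^{-\lambda_{BS}E_1}$, with no integration by parts. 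You instead integrate out the beacon gain $y=|h_{BS}|^2$ last, which gives $\int_0^{E_1}e^{-a/y-\lambda_{BS}y}\,dy$; the price is that your incomplete piece is a semi-infinite tail $\int_{E_1}^{\infty}$ rather than a finite interval, and you need the extra integration-by-parts plus the inversion $t=1/y$ to manufacture both the boundary term and a finite-interval integral for the quadrature. Your node identities $at_{l_5}=\gamma_5\Lambda_5(\delta_{l_5}+1)/2$ and $\lambda_{BS}/t_{l_5}=2\beta_5/\bigl(\Lambda_5(\delta_{l_5}+1)\bigr)$ check out (indeed $u=\Lambda_5E_1t$ maps your quadrature onto the paper's node-for-node, so the two approximations are literally identical, not just asymptotically equal), as does the prefactor bookkeeping via $a/E_1=\gamma_5\Lambda_5$ and $\beta_5=\lambda_{BS}\Lambda_5E_1$. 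The paper's ordering buys a shorter calculation; yours makes the role of the harvester saturation boundary $E_1$ more transparent, since every term of (31) is tagged to a region of $|h_{BS}|^2$. One immaterial discrepancy: you set $\varepsilon=2^{2R_{th}/(1-\alpha)}-1$, which is the correct inversion of (9), whereas the paper's appendices use $\varepsilon=2^{2R_{th}/(1-\alpha)}$; this shifts only the numerical value of $\varepsilon$, not the structure of the proof.
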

\begin{proof}
See Appendix D.
\end{proof}

\subsubsection{Transmission via Relay}
We then studied the security of the considered system by utilizing relay to transmit information in the following theorem.

\begin{theorem}\label{theorem 5}
The analytical expression of IP under the transmission via relay condition is provided in (32) as shown at the top of next page.
\begin{figure*}[!t]\label{32}
\normalsize
\begin{align}
\label{28}
P_{{\mathop{\rm int}} }^{{\rm{relay}}} \!\!=\!\! {e^{ -\! {\lambda \!_{B{R\!_c}}}\!{E\!_2}}}\!\!\left( \!{{e^{ - {\lambda\! _{{R\!_c}\!E}}{\Theta \!_6}}} \!-\! {e^{ -\! {\lambda\! _{{R\!_c}\!E}}\!{T\!_{11}}}}} \!\right) \!\!+\!\! \frac{{{\lambda\! _{{R\!_c}\!E}}}}{{{C\!_{11}}}}{e^{ -\! \frac{{{\lambda \!_{{R\!_c}\!E}}\!{C\!_{12}}}}{{{C\!_{11}}}}}}\!\!\!\left[\! {\sqrt {\!\frac{{{\beta \!_6}}}{{{\gamma \!_6}}}} {{\rm K}_1}\!\sqrt {\!{\beta\! _6}{\gamma \!_6}}  \!- \!\frac{{\pi\! {\Lambda \!_6}}}{{2{Y\!_6}}}\!\!\sum\limits_{{l\!_6}\! =\! 0}^{{Y\!_6}}\! {{e^{ - \!\frac{{{\beta \!_6}}}{{2{\Lambda \!_6}\!\left(\! {{\delta \!_{{l\!_6}}} \!+\! 1} \!\right)}} \!-\! \frac{{{\gamma \!_6}\!{\Lambda \!_6}\!\left( \!{{\delta _{{l\!_6}}}\! +\! 1} \!\right)}}{2}}}} \!\sqrt {1 \!-\! \delta _{{l\!_6}}^2} } \!\right],
\end{align}
\hrulefill \vspace*{0pt}
\end{figure*}
\end{theorem}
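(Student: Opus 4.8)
The plan is to follow the same route used for the $\Pr\{C_{R_mD}>R_{th}\}$ factor in Appendix A, transplanting it to the relay-to-eavesdropper link by replacing the destination $D$ with the eavesdropper $E$ and retaining the ``greater-than'' (intercept) event rather than its complement.

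First, I would translate the capacity event into an SINR event. Inverting (9), the threshold $R_{th}$ corresponds to $\varepsilon = 2^{2R_{th}/(1-\alpha)}-1$, so by (30) $P_{\mathrm{int}}^{\mathrm{relay}}=\Pr\{\gamma_{R_cE}>\varepsilon\}$. Inserting the unified SINR (8) with $j=R_cE$ and solving $\gamma_{R_cE}>\varepsilon$ for $|\hat h_{R_cE}|^2$ --- which is admissible only when $p_{R_cE}-\varepsilon q_{R_cE}>0$, the condition already flagged after Theorem 1 --- the event reduces to $|\hat h_{R_cE}|^2$ exceeding a threshold of the form $\kappa+\mu/\rho_{R_cE}$, where $\kappa,\mu$ are assembled from $\varepsilon,\sigma_{e_{R_cE}}^2,p_{R_cE},q_{R_cE},g_{R_cE}$. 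The essential complication is that $\rho_{R_cE}=P_{R_c}/N_{R_cE}$ is random, because $P_{R_c}$ is governed by the nonlinear harvesting law (6).

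Second, I would apply the total probability theorem over the two branches of (6). On the saturation branch $P_B|h_{BR_c}|^2>\Gamma_2$ the power $P_{R_c}$ is the constant $A_2\Gamma_2$, so the threshold on $|\hat h_{R_cE}|^2$ becomes the constant $\Theta_6$; since $|\hat h_{R_cE}|^2$ and $|h_{BR_c}|^2$ are independent exponentials, this branch, together with the boundary $y=E_2=\Gamma_2/P_B$ of the linear branch, contributes the closed-form term $e^{-\lambda_{BR_c}E_2}(e^{-\lambda_{R_cE}\Theta_6}-e^{-\lambda_{R_cE}T_{11}})$. On the linear branch $P_{R_c}=A_2P_B|h_{BR_c}|^2$, so $\rho_{R_cE}\propto|h_{BR_c}|^2$ and, after integrating out one of the two exponential channel gains, the surviving average collapses to a single integral of the canonical shape $\int e^{-\gamma_6 y-\beta_6/(4y)}\,dy$.

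Third, I would evaluate that integral. Over $[0,\infty)$ the identity $\int_0^\infty e^{-\gamma y-\beta/(4y)}\,dy=\sqrt{\beta/\gamma}\,{\rm K}_1(\sqrt{\beta\gamma})$ delivers the Bessel term $\sqrt{\beta_6/\gamma_6}\,{\rm K}_1(\sqrt{\beta_6\gamma_6})$, while the finite-range remainder (bounded by $\Lambda_6$) has no elementary primitive and is approximated by Gauss--Chebyshev quadrature of the first kind; the change of variable $y=\Lambda_6(\delta_{l_6}+1)/2$ with nodes $\delta_{l_6}=\cos[(2l_6-1)\pi/(2Y_6)]$ produces the weighted sum $\frac{\pi\Lambda_6}{2Y_6}\sum_{l_6=0}^{Y_6}e^{-\beta_6/(2\Lambda_6(\delta_{l_6}+1))-\gamma_6\Lambda_6(\delta_{l_6}+1)/2}\sqrt{1-\delta_{l_6}^2}$. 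Assembling the saturation and linear contributions, with the prefactor $\frac{\lambda_{R_cE}}{C_{11}}e^{-\lambda_{R_cE}C_{12}/C_{11}}$ obtained by pulling $\gamma_6=\lambda_{R_cE}/C_{11}$ out of the Bessel coefficient $\sqrt{\beta_6\gamma_6}$, yields (32). I expect this third step to be the crux: the bounded integral $\int e^{-\gamma y-\beta/(4y)}\,dy$ is not elementary, so the whole derivation pivots on splitting it into an exact infinite-range part (the ${\rm K}_1$ term) and a truncation resolved numerically by quadrature, all while keeping the constants $\beta_6,\gamma_6,\Lambda_6,\Theta_6,T_{11},C_{11},C_{12}$ mutually consistent across the two harvesting branches and continuous at the boundary $y=E_2$.
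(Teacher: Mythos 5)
Your proposal is correct and follows essentially the same route as the paper's Appendix E: split over the two branches of the nonlinear harvesting law (6), reduce $\gamma_{R_cE}>\varepsilon$ to a threshold on $\lvert\hat h_{R_cE}\rvert^2$ (valid only when $p_{R_cE}-\varepsilon q_{R_cE}>0$), collect the saturation branch and the linear-branch boundary into the closed-form term $e^{-\lambda_{BR_c}E_2}\bigl(e^{-\lambda_{R_cE}\Theta_6}-e^{-\lambda_{R_cE}T_{11}}\bigr)$, and evaluate the remaining integral via the identity $\int_0^\infty e^{-\gamma y-\beta/(4y)}\,dy=\sqrt{\beta/\gamma}\,{\rm K}_1(\sqrt{\beta\gamma})$ (Gradshteyn--Ryzhik 3.324.1) minus a Gauss--Chebyshev approximation of the finite-range piece over $[0,\Lambda_6]$. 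The only divergence is that you write $\varepsilon=2^{2R_{th}/(1-\alpha)}-1$ while the paper sets $\varepsilon=2^{2R_{th}/(1-\alpha)}$; your inversion of (9) is the more accurate one, and it does not affect the structure of the derivation.
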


\begin{figure}[!t]
\centering
\includegraphics [width=3.2in]{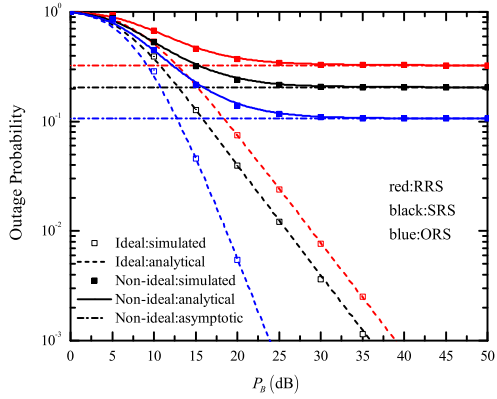}
\caption{{OP versus the transmit power for different relay selection strategies.}}
\label{Figure 2}
\end{figure}

\noindent{where ${C_{11}} = {A_2}{P_B}\left( {{p_{R_cE}} - {q_{R_cE}}\varepsilon } \right)$, ${C_{12}} = \sigma _{{e_{{R_c}E}}}^2{A_2}{P_B}\varepsilon \left( {{p_{R_cE}} + {q_{R_cE}}} \right)$, ${T_{11}} = {{{g_{R_cE}}{N_{{R_c}E}}\varepsilon }}/{{{C_{11}}{E_2}}} + {{{C_{12}}}}/{{{C_{11}}}}$, ${\beta _6} = 4{\lambda _{B{R_c}}}{g_{R_cE}}{N_{{R_c}E}}\varepsilon $, ${\gamma _6} = {{{\lambda _{{R_c}E}}}}/{{{C_{11}}}}$, ${\Lambda _6} = {C_{11}}{T_{11}} - {C_{12}}$, ${\delta _{{l_6}}} = \cos \left[ {{{\left( {2{l_6} - 1} \right)\pi }}/{{2{Y_6}}}} \right]$ and ${\Theta _6} = {{\varepsilon \sigma _{{e_{{R_c}E}}}^2{A_2}{\Gamma _2}\!\left( {{p_{R_cE}}\! + \!{q_{R_cE}}} \right) \!+\! \varepsilon {g_{R_cE}}{N_{{R_c}E}}}}/\!{{{A_2}{\Gamma _2}\left( {{p_{R_cE}} \!-\! \varepsilon {q_{R_cE}}} \right)}}$}.

\begin{proof}
See Appendix E.
\end{proof}

\section{Numerical Results}

In this section, some numerical results are provided to validate the correctness of the obtained results in the above section. The results are then verified using Monte Carlo simulations with $10^7$ iterations. Unless otherwise specified, we set the parameters as in Table I. 
\begin{table*}[!htb]
  \begin{center}
  \caption{Parameters for numerical results}
    \begin{tabular}{|l|l|}
      \hline
      Monte Carlo simulations repeated & ${10^7}$ iterations \\
      \hline
      Distance between nodes& ${d_{{R_m}D}} = {d_{{R_m}E}} = 1.5$, $d_{SE}=2$ \\
      \hline
      Shadow fading parameter&  $\beta  = 3$ \\
      \hline
      Time allocation factor&  $\alpha  = 0.5$ \\
      \hline
      Noise power  & ${N_{S{R_m}}}=N_{SE} = {N_{{R_m}D}}{\rm{ = }}{N_{{R_m}E}} = 1$ \\
      \hline
      Intercept capacity threshold  & ${R_{th}} = 0.05$ \\
      \hline
      \hline
      Amplitude at TX and RX  & ${\xi _t} = {\xi _r} = \{1, 1.1\}$, \\
      \hline
      \hline
      Phase at TX and RX  & ${\phi _t} = {\phi _r} = \{0^ \circ, 5^ \circ \}$  \\
      \hline
      Variance of CEEs  & ${\varsigma _1} = {\varsigma _2} = 0.5$  \\
      \hline
      Energy converse coefficient at source and relay  & ${\sigma_e^{2}} = \{0, 0.05 \}$  \\
      \hline

    \end{tabular}

  \end{center}
\end{table*}

\begin{figure}[!t]
\centering
\subfigure[ ]
{\begin{minipage}[t]{0.45\linewidth}
\centering
\includegraphics[width= 2in]{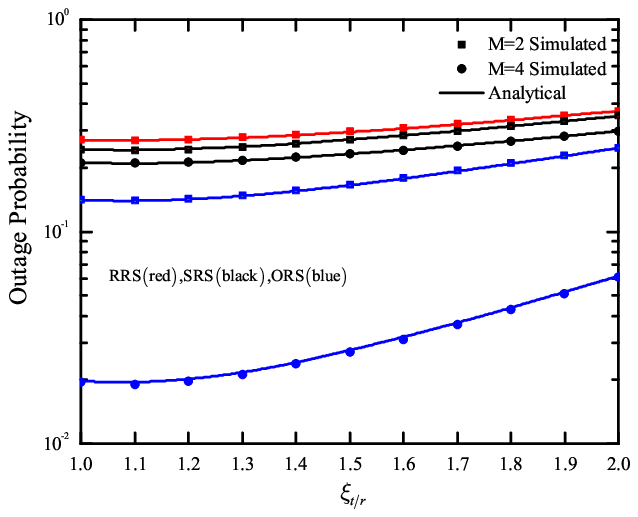}
\end{minipage}
}
\centering
\subfigure[ ]
{\begin{minipage}[t]{0.48\linewidth}
\centering
\includegraphics[width= 2in]{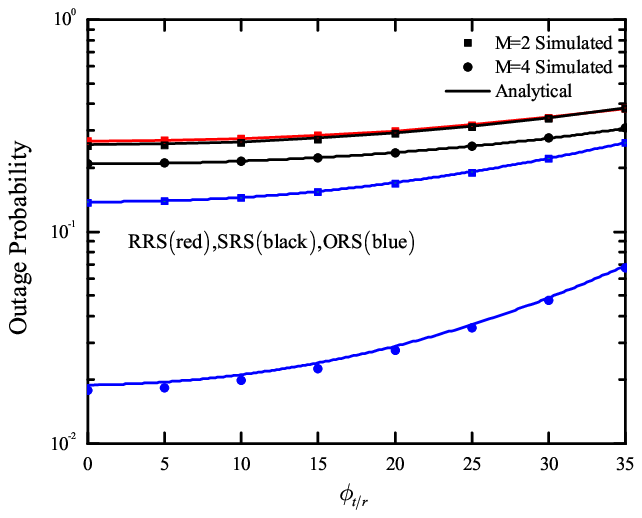}
\end{minipage}
}
\caption{Influence of IQI: (a) OP versus TX/RX amplitude; (b) OP versus phase mismatch.}
\end{figure}

\begin{figure}[!t]
\centering
\includegraphics [width=3.2in]{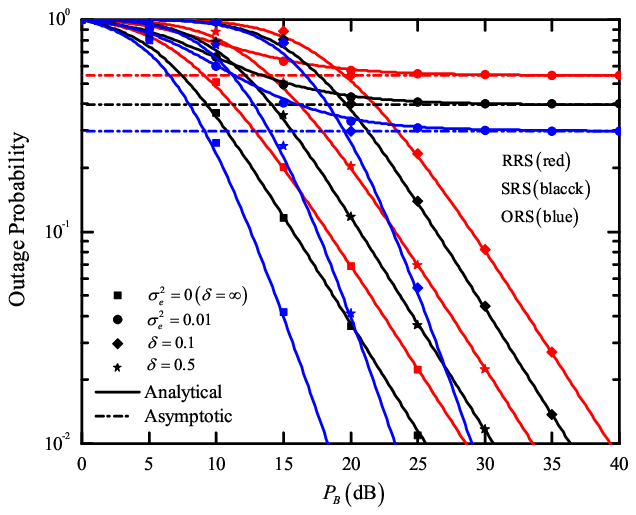}
\caption{{OP versus the transmit power for different CEE parameters.}}
\label{Figure 4}
\end{figure}

\begin{figure}[!t]
\centering
\includegraphics [width=3.2in]{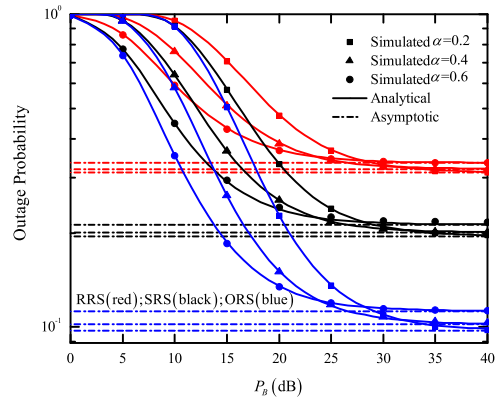}
\caption{{OP versus the transmit power for different time allocation factors.}}
\label{Figure 5}
\end{figure}
\subsection{Reliability Analysis}
Fig. 2 plots the OP versus the transmit power ${P_B}$ for different RS strategies. For the purpose of comparison, the curves of ideal conditions are provided. We set $M = 2$. These simulation results perfectly verify the derived closed-form analytical expressions of (13), (22) and (27) and asymptotic expressions of (14), (23) and (28), as well as (19), (24) and (29). We can also see from the simulation results that: 1) OP under the non-ideal conditions is greater than that the ideal conditions due to the IQI and CEEs; 2) The outage performance under RRS strategy is worse than SRS and ORS strategies, and ORS scheme has the best outage performance; 3) There are error floors of the OP for the three RS schemes at high SNRs due to CEEs, which means that the system OP performance can not always be improved by increasing transmit power.

Fig. 3 (a) illustrates the OP of TX/RX amplitude ${\xi _{{t \mathord{\left/ {\vphantom {t r}} \right. \kern-\nulldelimiterspace} r}}}$ for different number of relays ($M = \{2,4\}$) under two RS strategies. We set ${P_B} = 20$ dB. These results indicate that the OP for SRS scheme is higher than that of ORS scheme for arbitrary number of relay ($M > 1$). The gap of OP between the two schemes becomes large as the number of relays increases. Also, we can see that the outage performance of the considered system is proportional to the $M$. Finally, the OP of the system increases gradually with the increase of TX/RX amplitude, which means that the ${\xi _{{t \mathord{\left/ {\vphantom {t r}} \right. \kern-\nulldelimiterspace} r}}}$ has negative effects on the system performance.
Fig. 3 (b) plots the OP versus phase mismatch ${\phi _{{t \mathord{\left/ {\vphantom {t r}} \right. \kern-\nulldelimiterspace} r}}}$ for different number of relays under two RS strategies. As in Fig. 3 (a), we set ${P_B} = 20$dB. These simulation results verify that with the increase of ${\phi _{{t \mathord{\left/ {\vphantom {t r}} \right.  \kern-\nulldelimiterspace} r}}}$, the outage performance of the system gradually becomes worse. Furthermore, the effects of the number of relays on the system performance in Fig. 2 and Fig. 3  are further verified. From Fig. 3 (a) and Fig. 3 (b), we can observe that the parameters of amplitude and phase mismatches have the same effects on the outage performance.

\begin{figure}[!t]
\centering
\subfigure[ ]
{\begin{minipage}[t]{0.45\linewidth}
\centering
\includegraphics[width= 2in]{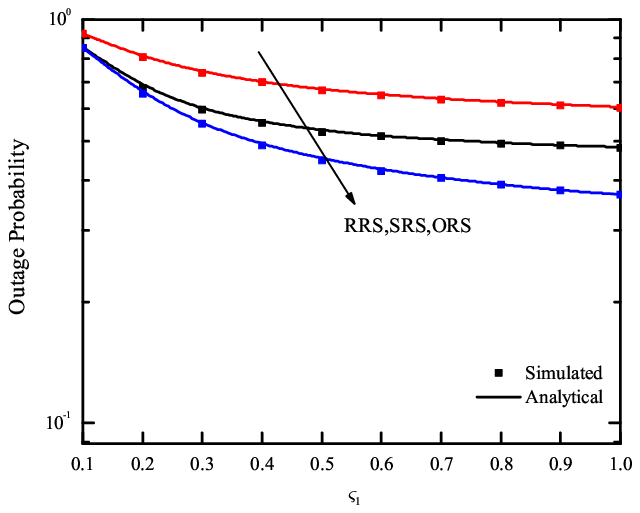}
\end{minipage}
}
\centering
\subfigure[ ]
{\begin{minipage}[t]{0.48\linewidth}
\centering
\includegraphics[width= 2in]{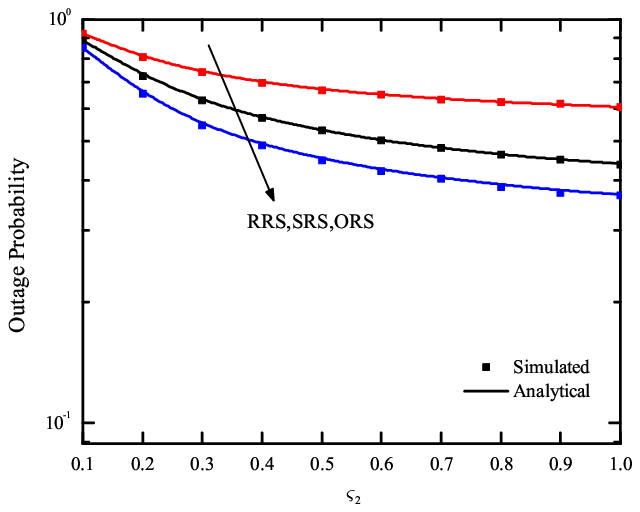}
\end{minipage}
}
\caption{Influence of energy conversion coefficient: (a) OP versus ${\varsigma _1}$; (b) OP versus ${\varsigma _2}$.}
\end{figure}
\begin{figure}[!t]
\centering
\includegraphics [width=3.2in]{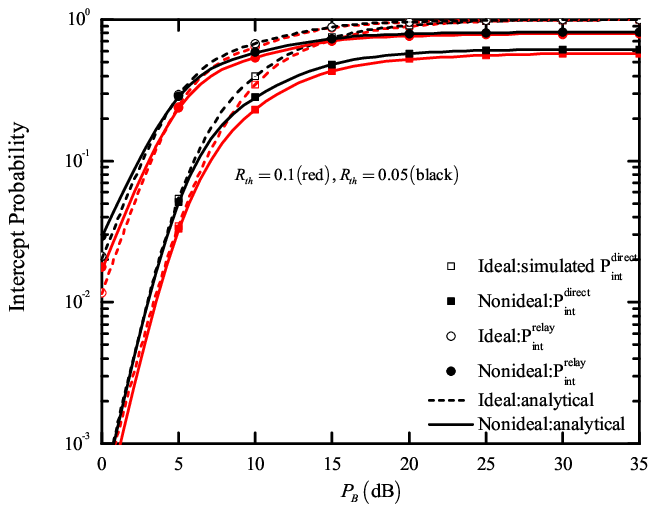}
\caption{{IP versus the transmit power for different threshold rates and transmission schemes.}}
\label{Figure 7}
\end{figure}

Fig. 4 shows the OP versus transmit power ${P_B}$ under three RS strategies for different CEE parameters. In this simulation, we set $M=2$. The simulation results reveal that: 1) when $\sigma _e^2$ is a non-negative constant, the OPs for the three RS schemes are positively correlated with CEE parameters; 2) When $\sigma _e^2 = {\Omega  \mathord{\left/ {\vphantom {\Omega  {\left( {1 + \delta \rho \Omega } \right)}}} \right. \kern-\nulldelimiterspace} {\left( {1 + \delta \rho \Omega } \right)}}$, the OPs decreases with the increase of $\delta $, which means the reliability of the system increases gradually; 3) There are error floors for the OP of the three RS schemes due to fixed non-negative CEEs.

\begin{figure}[!t]
\centering
\includegraphics [width=3.2in]{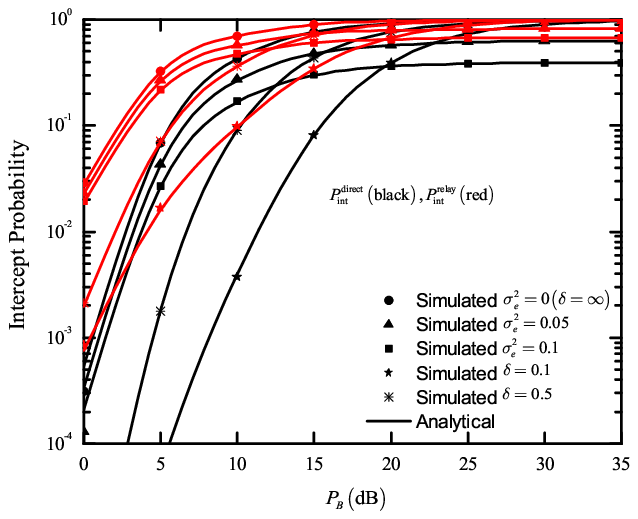}
\caption{{IP versus the transmit power for different CEE parameters.}}
\label{Figure 8}
\end{figure}
\begin{figure}[!t]
\centering
\includegraphics [width=3.2in]{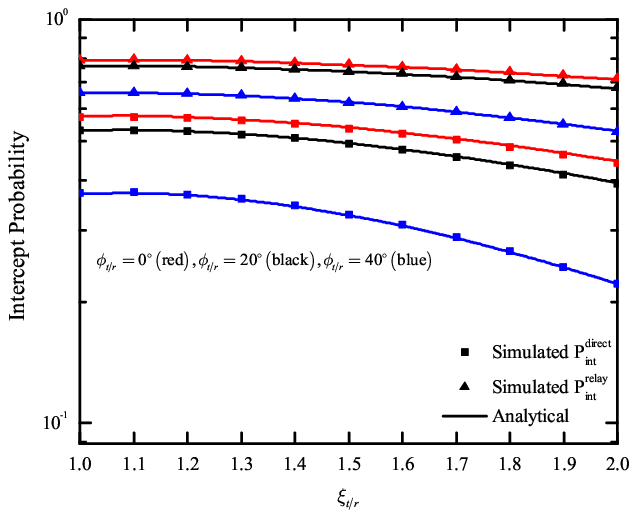}
\caption{{IP versus the TX/RX amplitude for different phase mismatch.}}
\label{Figure 9}
\end{figure}

Fig. 5 shows the OP versus ${P_B}$ for different time allocation efficiencies $\alpha  \in \left\{ {0.2, 0.4, 0.6} \right\}$ with $M = 2$. We have the following observations that: 1) when transmit power ${P_B} \in \left[ {0{\rm{dB}}:24{\rm{dB}}} \right]$, the outage performance of the system becomes stronger as the $\alpha$ gets larger, that is, the reliability of the system increases; 2) when ${P_B} \in \left[ {24{\rm{dB}}:28{\rm{dB}}} \right]$, these simulation results show that OP at $\alpha = 0.6$ is higher than $\alpha = 0.4$; 3) when ${P_B} \in \left[ {28{\rm{dB}}:32{\rm{dB}}} \right]$, the outage performance in the case of $\alpha = 0.6$ is worse than that in $\alpha = 0.2$ and with the increase of  $\alpha = 0.2; 0.4$, the OP
of the system decreases; 4) when ${P_B} \in \left[ {32{\rm{dB}}:40{\rm{dB}}} \right]$, the OP of the system gradually weakens with $\alpha = {0.6; 0.4; 0.2}$, that is, the outage performance of the system gradually improves with the changing order of time allocation efficiency.

Fig. 6 (a) and Fig. 6 (b) plot the OP versus energy conversion coefficients for the three RS schemes. In this simulation, the parameters are set $M = 2$, ${\varsigma _2} = 0.5$ in Fig. 6 (a) and ${\varsigma _1} = 0.5$ in Fig. 6 (b). From Fig. 6 (a) and Fig. 6 (b),  we can see that the OPs under the case of the three RS schemes degrade when the ${\varsigma _1}$ and ${\varsigma _2}$ grow, i.e. the reliability of the system enhances with the increase of the energy conversion coefficients of the system.

\begin{figure}[!t]
\centering
\includegraphics [width=3.2in]{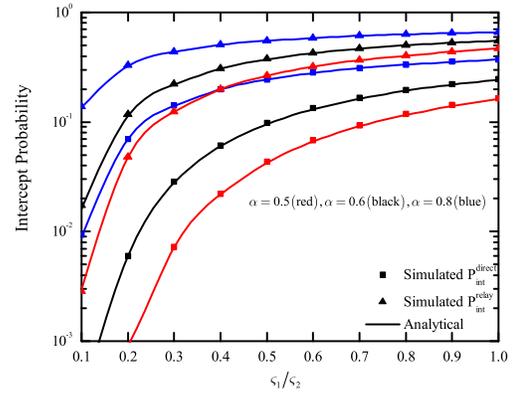}
\caption{{IP versus the energy converse coefficient for different time allocation factors.}}
\label{Figure 10}
\end{figure}

\begin{figure}[!t]
\centering
\includegraphics [width=3.2in]{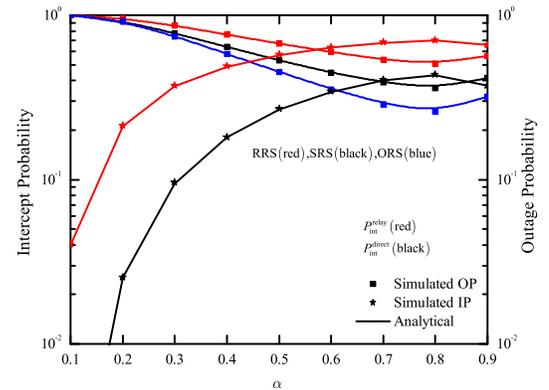}
\caption{{OP and IP versus the time allocation factor for different transmission strategies.}}
\label{Figure 11}
\end{figure}

\subsection{Security Analysis}
Fig. 7 investigates the IP versus transmit power ${P_B}$ for different threshold rates and link schemes under two conditions. The parameters is set as $M = 2$. For different threshold rates, it can be seen that the IP of the system decreases as the ${R_{th}}$ increases, and it can be obtained that the IP in direct link transmission condition is less than that in relay transmission condition. This means that cooperative relay can improve the system performance by shortening the distance between source and destination. We further explore that the IP in the ideal case is smaller than the non-ideal case, that is, the presence of IQI and ICSI in the system will strengthen the security of the system in the high SNR region.

Fig. 8 illustrates the IP versus ${P_B}$ for different CEE $\sigma _e^2$. In this simulation, two CEE cases are considered: 1) $\sigma _e^2$ is a non-negative constant; 2) $\sigma _e^2$ is the function of transmit average SNR. In this simulation, we set $M = 2$. It can be seen that with the aggravation of CEE parameters, the IP of the system becomes smaller. This means that IQI parameters are beneficial to the system IP. Similarly, it can be further concluded that the IP of the system under relay transmission condition is greater than the IP under direct transmission condition.

Fig. 9 shows the IP versus TX/RX amplitude ${\xi _{{t \mathord{\left/ {\vphantom {t r}} \right. \kern-\nulldelimiterspace} r}}}$ for different phase mismatch ${\phi _{{t \mathord{\left/ {\vphantom {t r}} \right. \kern-\nulldelimiterspace} r}}} = \left\{ {{0^ \circ };{{20}^ \circ };{{40}^ \circ }} \right\}$. The simulation results show that the IP under two transmission schemes of the system degrades with the increase of IQI parameter ${\xi _{{t \mathord{\left/ {\vphantom {t r}} \right.
 \kern-\nulldelimiterspace} r}}}$ and phase mismatch ${\phi _{{t \mathord{\left/ {\vphantom {t r}} \right. \kern-\nulldelimiterspace} r}}}$. This means that the IQI existing in this system is negatively correlated with IP under the conditions.

Fig. 10 plots the IP versus energy conversion coefficient ${\varsigma _{{1 \mathord{\left/
 {\vphantom {1 2}} \right.
 \kern-\nulldelimiterspace} 2}}}$ at $S$ and ${R_m}$ for different $\alpha $ with ${P_B} = 5$ dB. From the Fig. 10, we can draw the following conclusions: 1) Under the conditions of IQI and CEEs, the IP for the two transmission schemes of the system is proportional to the time allocation factor; 2) With the increases of energy conversion coefficient, the IP under different time allocation factors gradually increases.

Fig. 11 illustrates that the OP under three RS strategies and IP under two transmission link schemes versus time allocation factor $\alpha$. As can be seen from the simulation, when $\alpha $ increases, the OP of the proposed three RS schemes decreases first and then increases, while IP under two transmission schemes increases first and then decreases in the whole range, which means that there is an optimal value in the process of $\alpha $ gradually increasing. In addition, the optimal solution to balance the reliability and security of the system under consideration can be obtained.

\section{Conclusion and Future Work}
In this paper, we investigate the reliability and security of multi-relay networks in terms of OP and IP in the presence of IQI, ICSI, and nonlinear energy harvesters. To improve the security performance, three RS schemes are considered. For reliability, we analyze the asymptotic behavior in the high SNR regime and discuss the diversity order. For security, we consider two representative cases. Theoretical analysis and experiment results prove that: 1) The OP of the considered system increases as the TX/RX amplitude and phase increases; 2) As the number of relays increases, the system's outage performance becomes better; 3) Different CEE modes have different effects on the system. When the parameter is a non-negative constant, the OP of the system increases as $\sigma _e^2$ increases. When the parameter is a variable, the OP decreases as $\delta $ increases; 4) The performance of the system is proportional to IP and energy conversion coefficient; 5) There is a trade-off between reliability and security, that is, when the performance of the interruption is relaxed, IP can be enhanced, and vice versa; 6) When the system is under the condition of nonideal and the CEEs parameter is a constant, the OP exists error floor.

The work of our paper are focusing on the secure performance of wireless-powered relaying networks affected by IQI, however, our analysis are by no means conclusive and the system performance suffers from other hardware factors, such as, phase noise, amplifier non-linearities and quantization error, etc. To this end, the analytical method of our work can be extended to the above the hardware imperfections. In fact, the ICSI is caused not only by CCE at receiver, but also by feedback delay at the transmitter. Our analysis can be extended to investigate the secure performance of multi-antenna cooperative systems. The above exciting extensions would be done as our future work.

\numberwithin{equation}{section}
\section*{Appendix~A: Proof of Theorem 1} 
\renewcommand{\theequation}{A.\arabic{equation}}
\setcounter{equation}{0}
According to the definition of OP and (8), the following expression can be obtained as:
\begin{align}
\label{1}\nonumber
P_{out}^{RRS} \!&=\! \Pr\! \left\{ {\min \left(\! {{C_{S\!{R\!_m}}},{C_{{R\!_m}\!D}}} \!\right)\! <\! {R\!_{th}}}\! \right\}\\
& = 1\! -\! \underbrace {\Pr \!\left\{ {{C\!_{S\!{R\!_m}}} \!>\! {R\!_{th}}}\! \right\}}_{{I\!_1}}\underbrace {\Pr \left\{ {{C\!_{{R\!_m}\!D}} \!>\! {R\!_{th}}} \!\right\}}_{{I\!_2}}.
\end{align}

Substituting (8) into (A1), set $\varepsilon  = {2^{\frac{{2{R_{th}}}}{{1 - \alpha }}}}$ and the ${I_1}$ can be rewritten as:
\begin{align}\nonumber
{I_1} &\!\!=\!\! \Pr \!\!\left\{\!\!\! {\frac{{{{\left|\! {{{\hat h}\!_{S\!{R\!_m}}}} \!\right|}^2}{\rho \!_{S\!{R\!_m}}}\!{p\!_{S\!R\!_m}}}}{{\sigma _{{e\!_{S\!{R\!_m}}}}^2\!\!{\rho \!_{S\!{R\!_m}}}\!{p\!_{S\!R\!_m}} \!\!\!+\!\! {{\left| \!{{{\hat h}\!_{S\!{R\!_m}}}} \!\right|}^2}\!\!{\rho \!_{S\!{R\!_m}}}\!{q\!_{S\!R\!_m}} \!\!\!+\!\! \sigma _{{e\!_{S\!{R\!_m}}}}^2\!\!{\rho \!_{S\!{R\!_m}}}\!\!{q\!_{S\!R\!_m}} \!\!\!+\!\! {g\!_{S\!R\!_m}}}} \!\!>\!\! \varepsilon }\!\!\! \right\}\\
 &= {M_1} + {M_2},
\end{align}
where
\begin{align}\nonumber
{M_1} & \!\!=\! \!\Pr \!\!\left\{ \!\!{\frac{{{g\!_{S\!R\!_m}}\!{N\!_{S\!{R\!_m}}}\!\varepsilon }}{{{C\!_1}\!{{\left|\! {{{\hat h}\!_{S\!{R\!_m}}}} \!\right|}^2} \!\!-\!\! {C\!_2}}} \!\!<\!\! {{\left| \!{{h\!_{B\!S}}} \!\right|}^2}\!\! \le \!\!{E\!_1},{{\left|\! {{{\hat h}\!_{S\!{R\!_m}}}} \!\right|}^2} \!\!\ge\!\! \frac{{{g\!_{S\!R\!_m}}\!{N\!_{S\!{R\!_m}}}\!\varepsilon }}{{{C\!_1}{E\!_1}}} \!\!+\!\! \frac{{{C\!_2}}}{{{C\!_1}}}} \!\!\right\},
\end{align}
and
\begin{align}\nonumber
{M_2} &= \Pr \left\{ {{{\left| {{{\hat h}_{S{R_m}}}} \right|}^2} > {\Theta _1},{{\left| {{h_{BS}}} \right|}^2} > {E_1}} \right\}\\
& = {e^{ - {\lambda _{S{R_m}}}{\Theta _1} - {\lambda _{BS}}{E_1}}},
\end{align}
in there, ${T_2} = \frac{{{g_{SR_m}}{N_{S{R_m}}}\varepsilon }}{{{C_1}{{\left| {{{\hat h}_{S{R_m}}}} \right|}^2} - {C_2}}}$.

Substituting the PDF and CDF of Rayleigh fading into (A3), the following formula can be obtained by further calculation as:
\begin{equation}
{M_1} =  - {\lambda _{S{R_m}}}\left( {{e^{ - {\lambda _{BS}}{E_1}}}{\varphi _1} - {\varphi _2}} \right),
\end{equation}
\begin{equation}
{\varphi _1} = \int_{{T_1}}^\infty  {{e^{ - {\lambda _{S{R_m}}}y}}dy}  = \frac{1}{{{\lambda _{S{R_m}}}}}{e^{ - {\lambda _{S{R_m}}}{T_1}}},
\end{equation}
according to the formula (3.324.1) in \cite{50} and the following expression (A7) of Gaussian-Chebyshev quadrature \cite{51}, the ${\varphi _2}$ of the top of next page can be obtained as
\begin{equation}
\int_0^\Lambda  {g\left( x \right)dx \approx \frac{{\pi \Lambda }}{{2Y}}} \sum\limits_{l = 0}^Y {g\left( {\frac{{\Lambda \left( {{\delta _l} + 1} \right)}}{2}} \right)} \sqrt {1 - \delta _l^2} ,
\end{equation}
\begin{figure*}[!t]
\normalsize
\begin{align}\nonumber
{\varphi _2} &= \int_{{T_1}}^\infty  {{e^{ - {\lambda _{S{R_m}}}y}}{e^{ - \frac{{{\lambda _{BS}}{g_{SR_m}}{N_{S{R_m}}}\varepsilon }}{{{C_1}x - {C_2}}}}}dy} \\
 &= \frac{1}{{{C_1}}}{e^{ - \frac{{{\lambda _{S{R_m}}}{C_2}}}{{{C_1}}}}}\left[ {\sqrt {\frac{{{\beta _1}}}{{{\gamma _1}}}} {{\rm K}_1}\left( {\sqrt {{\beta _1}{\gamma _1}} } \right) - } \right.\left. {\frac{{\pi {\Lambda _1}}}{{2{Y_1}}}\sum\limits_{{l_1} = 0}^{{Y_1}} {{e^{ - \frac{{2{\lambda _{BS}}{g_{SR_m}}{N_{S{R_m}}}\varepsilon }}{{{\Lambda _1}\!\left( {{\delta _{{l_1}}} + 1} \right)}} - \frac{{{\lambda _{S{R_m}}}{\Lambda _1}\left( {{\delta _{{l_1}}} + 1} \right)}}{{2{C_1}}}}}} \sqrt {1 - \delta _{{l_1}}^2} } \right],
\end{align}
\hrulefill \vspace*{0pt}
\end{figure*}
\noindent Substituting (A6) and (A8) into (A5), the ${M_1}$ can be derived. Then substituting (A4) and (A5) into (A2), the ${I_1}$ can be obtained.

Substituting (8) into (A1), the ${I_2}$ of following expression can be rewritten as
\begin{align}
{I_2} = {M_3} + {M_4}
\end{align}

Similar to the calculation of ${I_1}$, the ${M_3}$ in the top of next page and ${M_4}$ can be obtained as follows
\begin{figure*}[!t]
\normalsize
\begin{align}
{M_3} \!=\!  -\! {e^{ -\! {\lambda \!_{B\!{R\!_m}}}{E\!_2} \!-\! {\lambda _{{R\!_m}\!D}}{T\!_3}}} \!+\! \frac{{{\lambda _{{R\!_m}\!D}}}}{{{C\!_3}}}{e^{ -\! \frac{{{\lambda _{{R\!_m}\!D}}{C\!_4}}}{{{C\!_3}}}}}\left[ \!{\sqrt {\!\frac{{{\beta \!_2}}}{{{\gamma \!_2}}}} {{\rm K}\!_1}\!\sqrt {\!{\beta \!_2}{\gamma \!_2}} \! - \!\frac{{\pi {\Lambda _2}}}{{2{Y\!_2}}}\!\sum\limits_{{l_2} = 0}^{{Y\!_2}} {{e^{ -\! \frac{{2{\lambda _{B\!{R\!_m}}}{g\!_{R\!_m\!D}}{N\!_{{R\!_m}\!D}}\varepsilon }}{{{\Lambda \!_2}\left(\! {{\delta _{{l_2}}}\! +\! 1} \!\right)}}\!-\! \frac{{{\lambda _{{R_m}D}}{\Lambda _2}\left( {{\delta _{{l_2}}} \!+\! 1} \!\right)}}{{2{C_3}}}}}}\! \sqrt {1 \!-\! \delta _{{l_2}}^2} }\! \right],
\end{align}
\hrulefill \vspace*{0pt}
\end{figure*}
and
\begin{equation}
{M_4} = {e^{ - {\lambda _{{R_m}D}}{\Theta _2} - {\lambda _{B{R_m}}}{E_2}}},
\end{equation}
put (A10) and (A11) into (A9), the ${I_2}$ can be derived.

Substituting the expressions of ${I_1}$ and ${I_2}$, the (13) can be obtained.

\numberwithin{equation}{section}
\section*{Appendix~B: Proof of Theorem 2} 
\renewcommand{\theequation}{B.\arabic{equation}}
\setcounter{equation}{0}

For SRS strategy, substituting (21) into (11), the following expression can be obtained as
\begin{align}\nonumber
P_{out}^{SRS}& = \Pr \left\{ {\min \left( {{C_{S{R_a}}},{C_{{R_a}D}}} \right) < {R_{th}}} \right\}\\
& = 1 - \underbrace {\Pr \left\{ {{C_{S{R_a}}} > {R_{th}}} \right\}}_{{I_3}}\underbrace {\Pr \left\{ {{C_{{R_a}D}} > {R_{th}}} \right\}}_{{I_4}},
\end{align}
the CDF and PDF of ${\left| {{{\hat h}_{S{R_a}}}} \right|^2}$ can be written as
\begin{equation}
{F_{{{\left| {{{\hat h}_{{R_a}D}}} \right|}^2}}}\left( y \right) = {\left[ {1 - {e^{ - {\lambda _{{R_a}D}}y}}} \right]^M},
\end{equation}
\begin{equation}
{f_{{{\left|\! {{{\hat h}\!_{S\!{R\!_a}}}} \!\right|}^2}}}\!\left( \!y \right) \!=\! M{\lambda \!_{S\!{R\!_a}}}\!\sum\limits_{s\! =\! 0}^{M\! - \!1} {\left(\! {\begin{array}{*{20}{c}}
{M\! - \!1}\\
s
\end{array}} \!\right){{\left(\! { - \!1} \!\right)}^s}{e^{ -\! {\lambda \!_{S\!{R\!_a}}}\!\left(\! {s \!+\! 1} \!\right)y}}} .
\end{equation}

Similar to the calculation process of Appendix A, the ${I_3}$ and ${I_4}$ can be expressed as
\begin{align}
{I_3}  = {M_5} + {M_6},
\end{align}
where
\begin{align}\nonumber
{M_5} &=\! \Pr\! \left\{\! {{{\left|\! {{h\!_{B\!S}}} \!\right|}^2}\left(\! {{C\!_5}{{\left|\! {{{\hat h}\!_{S\!{R\!_a}}}}\! \right|}^2}\! -\! {C\!_6}} \!\right) \!>\! {g\!_{S\!R\!_m}}{N\!_{S\!{R\!_a}}}\!\varepsilon ,{{\left|\! {{h\!_{B\!S}}} \!\right|}^2} \!\le \!{E\!_1}}\! \right\}\\
& = \Xi \left( {{e^{ - {\lambda _{BS}}{E_1}}}{\varphi _5} - {\varphi _6}} \right),
\end{align}
\begin{align}
{\varphi _5} = \int_{{T_5}}^\infty  {{e^{ - {\lambda _{S{R_a}}}\left( {s + 1} \right)x}}dx}  = \frac{1}{{{\lambda _{S{R_a}}}\left( {s + 1} \right)}}{e^{ - {\lambda _{S{R_a}}}\left( {s + 1} \right){T_5}}},
\end{align}
the ${\varphi _6}$ in the top of next page,
\begin{figure*}[!t]
\normalsize
\begin{align}\nonumber
{\varphi _6}& = \int_{{T_5}}^\infty  {{e^{ - {\lambda _{S{R_a}}}\left( {s + 1} \right)x - {\lambda _{BS}}{T_6}}}dx} \\
& = \frac{1}{{{C_5}}}{e^{ - \frac{{{\lambda _{S{R_a}}}\left( {s + 1} \right){C_6}}}{{{C_5}}}}}\left[ {\sqrt {\frac{{{\beta _3}}}{{{\gamma _3}}}} {{\rm K}_1}\left( {\sqrt {{\beta _3}{\gamma _3}} } \right) - \frac{{\pi {\Lambda _3}}}{{2{Y_3}}}\sum\limits_{{l_3} = 0}^{{Y_3}} {{e^{ - \frac{{2{\lambda _{BS}}{g\!_{S\!R\!_m}}{N_{S{R_a}}}\varepsilon }}{{{\Lambda _3}\left( {{\delta _{{l_3}}} + 1} \right)}} - \frac{{{\lambda _{S{R_a}}}\left( {s + 1} \right){\Lambda _3}\left( {{\delta _{{l_3}}} + 1} \right)}}{{2{C_5}}}}}\sqrt {1 - \delta _{{l_3}}^2} } } \right],
\end{align}
\hrulefill \vspace*{0pt}
\end{figure*}
and
\begin{align}\nonumber
{M_6} &= \Pr \left\{ {{{\left| {{{\hat h}_{S{R_a}}}} \right|}^2} > {\Theta _3},{{\left| {{h_{BS}}} \right|}^2} > {E_1}} \right\}\\
& = \left[ {1 - {{\left( {1 - {e^{ - {\lambda _{S{R_a}}}{\Theta _3}}}} \right)}^M}} \right]{e^{ - {\lambda _{BS}}{E_1}}},
\end{align}
putting (B6) and (B7) into (B5), the ${M_5}$ can be obtained; substituting (B5) and (B8) into (B4), the ${I_3}$ can be derived.

Then, substituting (8) into (B1), the following formula can be expressed as
\begin{align}
{I_4} = {M_7} + {M_8},
\end{align}
where
\begin{align}\nonumber
{M_7} &=\! \Pr\! \left\{\! {{{\left|\! {{h\!_{B\!{R\!_a}}}} \!\right|}^2}\!\left(\! {{C\!_7}{{\left|\! {{{\hat h}\!_{{R\!_a}\!D}}} \!\right|}^2}\! -\! {C\!_8}} \!\right) \!\!>\!\! {g\!_{R\!_m\!D}}\!{N\!_{{R\!_a}\!D}}\!\varepsilon ,{{\left| {{h\!_{B\!{R\!_a}}}} \!\right|}^2}\! \le\! {E\!_2}} \!\right\}\\
& =  - {\lambda _{{R_a}D}}\left( {{e^{ - {\lambda _{B{R_a}}}{E_2}}}{\varphi _7} - {\varphi _8}} \right),
\end{align}
\begin{equation}
{\varphi _7} = \int_{{T_7}}^\infty  {{e^{ - {\lambda _{{R_a}D}}y}}dy}  = \frac{1}{{{\lambda _{{R_a}D}}}}{e^{ - {\lambda _{{R_a}D}}{T_7}}},
\end{equation}
the ${\varphi _8}$ in the top of next page and $M_8$ is obtained as following
\begin{figure*}[!t]
\normalsize
\begin{align}
{\varphi _8} = \frac{1}{{{C_7}}}{e^{ - \frac{{{\lambda _{{R_a}D}}{C_8}}}{{{C_7}}}}}\left[ {\sqrt {\frac{{{\beta _4}}}{{{\gamma _4}}}} {{\rm K}_1}\sqrt {{\beta _4}{\gamma _4}}  - \frac{{\pi {\Lambda _4}}}{{2{Y_4}}}\sum\limits_{{l_4} = 0}^{{Y_4}} {{e^{ - \frac{{2{\lambda _{B{R_a}}}{g_{R_mD}}{N_{{R_a}D}}\varepsilon }}{{{\Lambda _4}\left( {{\delta _{{l_4}}} + 1} \right)}} - \frac{{{\lambda _{{R_a}D}}{\Lambda _4}\left( {{\delta _{{l_4}}} + 1} \right)}}{{2{C_7}}}}}} \sqrt {1 - \delta _{{l_4}}^2} } \right],
\end{align}
\hrulefill \vspace*{0pt}
\end{figure*}
\begin{align}\nonumber
{M_8} &= \Pr \left\{ {{{\left| {{{\hat h}_{{R_a}D}}} \right|}^2} > {\Theta _4},{{\left| {{h_{B{R_a}}}} \right|}^2} > {E_2}} \right\}\\
 &= {e^{ - {\lambda _{{R_a}D}}{\Theta _4} - {\lambda _{B{R_a}}}{E_2}}},
\end{align}
putting (B11) and (B12) into (B10), the ${M_7}$ can be obtained; substituting (B10) and (B13) into (B9), the ${I_4}$ can be derived.

Substituting ${I_3}$ and ${I_4}$ into (B1), the (22) can be obtained.

\numberwithin{equation}{section}
\section*{Appendix~C: Proof of Theorem 3} 
\renewcommand{\theequation}{C.\arabic{equation}}
\setcounter{equation}{0}

According to the definition of (11) and (25), the following expression for ORS strategy can be obtained as
\begin{align}\nonumber
P_{out}^{ORS} &= \Pr \left\{ {{C_{{R_{{m^ * }}}}} < {R_{th}}} \right\}\\\nonumber
& = \Pr \left\{ {\mathop {\max }\limits_{1 \le m \le M} \min \left\{ {{\gamma _{S{R_m}}},{\gamma _{{R_m}D}}} \right\} < \varepsilon } \right\}\\
& = \prod\limits_{m = 1}^M {\left( {1 - {I_1}{I_2}} \right)},
\end{align}
put ${I_1}$ and ${I_2}$ of Appendix A into (C1), the (26) can be obtained.

\numberwithin{equation}{section}
\section*{Appendix~D: Proof of Theorem 4} 
\renewcommand{\theequation}{D.\arabic{equation}}
\setcounter{equation}{0}
Substituting (8) into (30), the following expression can be obtained as
\begin{align}\nonumber
P_{{\rm{int}}}^{{\rm{direct}}} &\!\buildrel \Delta \over =\! \Pr\! \left\{\! {{C\!_{S\!E}} \!>\! {R\!_{th}}} \!\right\}\\
& = {M_9} + {M_{10}}.
\end{align}

Similar to the Appendix A, the ${M_{9}}$ and ${M_{10}}$ can be expressed as
\begin{align}\nonumber
{M\!_9} \!&=\! \Pr\! \left\{\! {\underbrace {\frac{{{g_{SE}}{N\!_{S\!E}}\varepsilon }}{{{C\!_9}{{\left| {{{\hat h}\!_{S\!E}}} \right|}^2} \!\!-\!\! {C\!_{10}}}}}_{{T\!_9}} \!\!<\!\! {{\left| {{{\hat h}\!_{B\!S}}} \right|}^2} \!\!\le\!\! {E\!_1},{{\left| {{{\hat h}\!_{S\!E}}} \right|}^2} \!\!\ge\!\! \underbrace {\frac{{{g\!_{S\!E}}{N\!_{S\!E}}\varepsilon }}{{{C\!_9}{E\!_1}}} \!\!+\!\! \frac{{{C\!_{10}}}}{{{C\!_9}}}}_{{T\!_{10}}}} \!\right\}\\
& =  - {\lambda \!_{S\!E}}\left(\! {{e^{ -\! {\lambda \!_{B\!S}}{E\!_1}}}{\varphi \!_9}\! -\! {\varphi \!_{10}}} \!\right),
\end{align}
\begin{align}\nonumber
{M_{10}} &= \Pr \left\{ {{{\left| {{{\hat h}_{SE}}} \right|}^2} > {\Theta _5},{{\left| {{{\hat h}_{BS}}} \right|}^2} > {E_1}} \right\}\\
 &= {e^{ - {\lambda _{SE}}{\Theta _5} - {\lambda _{BS}}{E_1}}},
\end{align}
where
\begin{align}
{\varphi _9} = \int_{{T_9}}^\infty  {{e^{ - {\lambda _{SE}}y}}dy}  = \frac{1}{{{\lambda _{SE}}}}{e^{ - {\lambda _{SE}}{T_9}}},
\end{align}
and ${\varphi _{10}}$ on the top of next page.
\begin{figure*}[!t]
\normalsize
\begin{align}
{\varphi _{10}} = \frac{1}{{{C_9}}}{e^{ - \frac{{{\lambda _{SE}}{C_{10}}}}{{{C_9}}}}}\left[ {2\sqrt {\frac{{{\beta _5}}}{{{\gamma _5}}}} {{\rm K}_1}\left( {2\sqrt {{\beta _5}{\gamma _5}} } \right) - \frac{{\pi {\Lambda _5}}}{{2{Y_5}}}\sum\limits_{{l_5} = 0}^{{Y_5}} {{e^{ - \frac{{{\gamma _5}{\Lambda _5}\left( {{\delta _{{l_5}}} + 1} \right)}}{2} - \frac{{2{\beta _5}}}{{{\Lambda _5}\left( {{\delta _{{l_5}}} + 1} \right)}}}}\sqrt {1 - \delta _{{l_5}}^2} } } \right],
\end{align}
\hrulefill \vspace*{0pt}
\end{figure*}
Putting (D4) and (D5) into (D2), the ${M_{9}}$ can be derived; then, substituting ${M_{9}}$ and ${M_{10}}$ into (D1), the (31) can be obtained.


\numberwithin{equation}{section}
\section*{Appendix~E: Proof of Theorem 5} 
\renewcommand{\theequation}{E.\arabic{equation}}
\setcounter{equation}{0}

Substituting (8) into (26), the following expression can be obtained as
\begin{align}\nonumber
P_{{\rm{int}}}^{{\rm{relay}}} & \buildrel \Delta \over = \Pr \left\{ {{C_{{R_c}E}} > {R_{th}}} \right\}\\
& = {M_{11}} + {M_{12}}.
\end{align}

Similar to the Appendix A, the ${M_{11}}$ and ${M_{12}}$ can be expressed as
\begin{align}\nonumber
{M_{11}} \!\!&=\!\! \Pr\!\! \left\{\!\! {\underbrace {\frac{{{g\!_{R\!_c\!E}}{N\!_{{R\!_c}\!D}}\!\varepsilon }}{{{C\!_9}\!{{\left|\! {{{\hat h}\!_{{R\!_c}\!D}}} \!\right|}^2} \!\!-\!\! {C\!_{10}}}}}_{{T\!_{10}}}\! \!<\! \!{{\left|\! {{h\!_{B\!{R\!_c}}}} \!\right|}^2}\! \le\! {E\!_2},{{\left|\! {{{\hat h}\!_{{R\!_c}\!E}}} \!\right|}^2} \!\!\ge\!\! \underbrace {\frac{{{g\!_{R\!_c\!E}}{N\!_{{R\!_c}\!E}}\!\varepsilon }}{{{C\!_9}\!{E\!_2}}} \!\!+\!\! \frac{{{C\!_{10}}}}{{{C\!_9}}}}_{{T\!_9}}} \!\!\right\}\\
& =  - {\lambda _{{R_c}E}}\left( {{e^{ - {\lambda _{B{R_c}}}{E_2}}}{\varphi _9} - {\varphi _{10}}} \right),
\end{align}
\begin{align}\nonumber
{M_{12}} &= \Pr \left\{ {{{\left| {{{\hat h}_{{R_c}E}}} \right|}^2} > {\Theta _5},{{\left| {{h_{B{R_c}}}} \right|}^2} > {E_2}} \right\}\\
& = {e^{ - {\lambda _{{R_c}E}}{\Theta _5} - {\lambda _{B{R_c}}}{E_2}}},
\end{align}
where
\begin{align}
{\varphi _9} = \int_{{T_9}}^\infty  {{e^{ - {\lambda _{{R_c}E}}y}}dy}  = \frac{1}{{{\lambda _{{R_c}E}}}}{e^{ - {\lambda _{{R_c}E}}{T_9}}},
\end{align}
and ${\varphi _{10}}$ in the top of next page.
\begin{figure*}[!t]
\normalsize
\begin{align}
{\varphi _{10}} = \frac{1}{{{C_9}}}{e^{ - \frac{{{\lambda _{{R_c}E}}{C_{10}}}}{{{C_9}}}}}\left[ {\sqrt {\frac{{{\beta _5}}}{{{\gamma _5}}}} {{\rm K}_1}\sqrt {{\beta _5}{\gamma _5}}  - \frac{{\pi {\Lambda _5}}}{{2{Y_5}}}\sum\limits_{{l_5} = 0}^{{Y_5}} {{e^{ - \frac{{2{\lambda _{B{R_c}}}{g_{R_cE}}{N_{{R_c}E}}\varepsilon }}{{{\Lambda _5}\left( {{\delta _{{l_5}}} + 1} \right)}} - \frac{{{\lambda _{{R_c}E}}{\Lambda _5}\left( {{\delta _{{l_5}}} + 1} \right)}}{{2{C_9}}}}}} \sqrt {1 - \delta _{{l_5}}^2} } \right],
\end{align}
\hrulefill \vspace*{0pt}
\end{figure*}

Putting (E4) and (E5) into (E2), the ${M_{11}}$ can be derived; then, substituting ${M_{11}}$ and ${M_{12}}$ into (E1), the (32) can be obtained.

\bibliographystyle{IEEEtran}
\bibliography{myreference}

\end{document}